\newcommand{\aaa}{{\tt aaa}}
\newcommand{\abb}{{\tt abb}}
\newcommand{\abc}{{\tt abc}}
\newcommand{\acc}{{\tt acc}}
\newcommand{\bbd}{{\tt bbd}}
\newcommand{\bbe}{{\tt bbe}}
\newcommand{\bcd}{{\tt bcd}}
\newcommand{\bce}{{\tt bce}}
\newcommand{\ddd}{{\tt ddd}}
\newcommand{\ccd}{{\tt ccd}}
\newcommand{\dde}{{\tt dde}}
\newcommand{\Om}{\Omega}
\newcommand{\Si}{\Sigma}
\renewcommand{\H}{\mathbb{H}}
\newcommand{\R}{\mathbb{R}}
\renewcommand{\S}{\mathbb{S}}
\newcommand{\A}{{\cal A}}
\newcommand{\B}{{\cal B}}
\newcommand{\C}{{\cal C}}
\renewcommand{\O}{{\cal O}}
\newcommand{\T}{{\cal T}}
\newcommand{\cN}{{\cal N}}
\newcommand{\cS}{{\cal S}}
\newcommand{\cfk}{\mathfrak{c}}
\newcommand{\eps}{\varepsilon}
\newcommand{\Eps}{\bm{\eps}}
\newcommand{\ld}{\lambda}
\newcommand{\tet}{\theta}
\newcommand{\deh}{\partial}
\newcommand{\BE}{\begin{equation}}  
\newcommand{\EE}{\end{equation}}
\newtheorem{asp}{Assumption}
\newtheorem{defn}{Definition}
\newtheorem{expl}{Example}
\newtheorem{lem}{Lemma}
\newtheorem{prop}{Proposition}
\newtheorem{thm}{Theorem}
\begin{document}

\centerline{\large{\bf The Isoperimetric Problem in a Lattice of $\H^3$}}

\bigskip

\centerline{
Guillermo Lobos\footnote{Maths Dept, UFSCar, rod. Washington Lu\'is Km 235, 13565-905 S\~ao Carlos-SP, Brazil, {\bf Email: lobos@dm.ufscar.br}}, 
Alvaro Hancco\footnote{UFT, av. Paraguai esq.c/ r. Uxiramas, Cimba, 77824-838 Aragua\'ina-TO, Brazil, {\bf E-mail: alvaroyucra@uft.edu.br}}, 
Val\'erio Ramos Batista\footnote{{\it Corresponding Author} at UFABC, av. dos Estados 5001, 09210-580 St Andr\'e-SP, Brazil, {\bf Email: vramos1970@gmail.com}}}

\begin{abstract}
The isoperimetric problem is one of the oldest in geometry and it consists of finding a surface of minimum area that encloses a given volume $V$. It is particularly important in physics because of its strong relation with stability, and this also involves the study of phenomena in non-Euclidean spaces. Of course, such spaces cannot be customized for lab experiments but we can resort to computational simulations, and one of the mostly used softwares for this purpose is the Surface Evolver. In this paper we use it to study the isoperimetric problem in a lattice of the three dimensional hyperbolic space. More precisely: up to isometries, there exists a unique tesselation of $\H^3$ by non-ideal cubes $\C$. Now let $\Om$ be a connected isoperimetric region inside the non-ideal hyperbolic cube $\C$. Under weak assumptions on graph and symmetry we find all numerical solutions $\Si=\deh\Om$ of the isoperimetric problem in $\C$.
\end{abstract}
\ \\
{\it Keywords:} Isoperimetric Problem, Hyperbolic Lattice, Surface Evolver
\ \\
{\it PACS:} 68U05


\section{Introduction}
\label{intro}

The isoperimetric problem is one of the oldest in geometry and it consists of finding a surface of minimum area that encloses a given volume $V$. For the readers not familiar with this problem we briefly comment on a physical experiment to identify possible shapes of a soap bubble inside a box. See Fig.~\ref{intro}.

\begin{figure}[ht!]
\center
\includegraphics[scale=0.25]{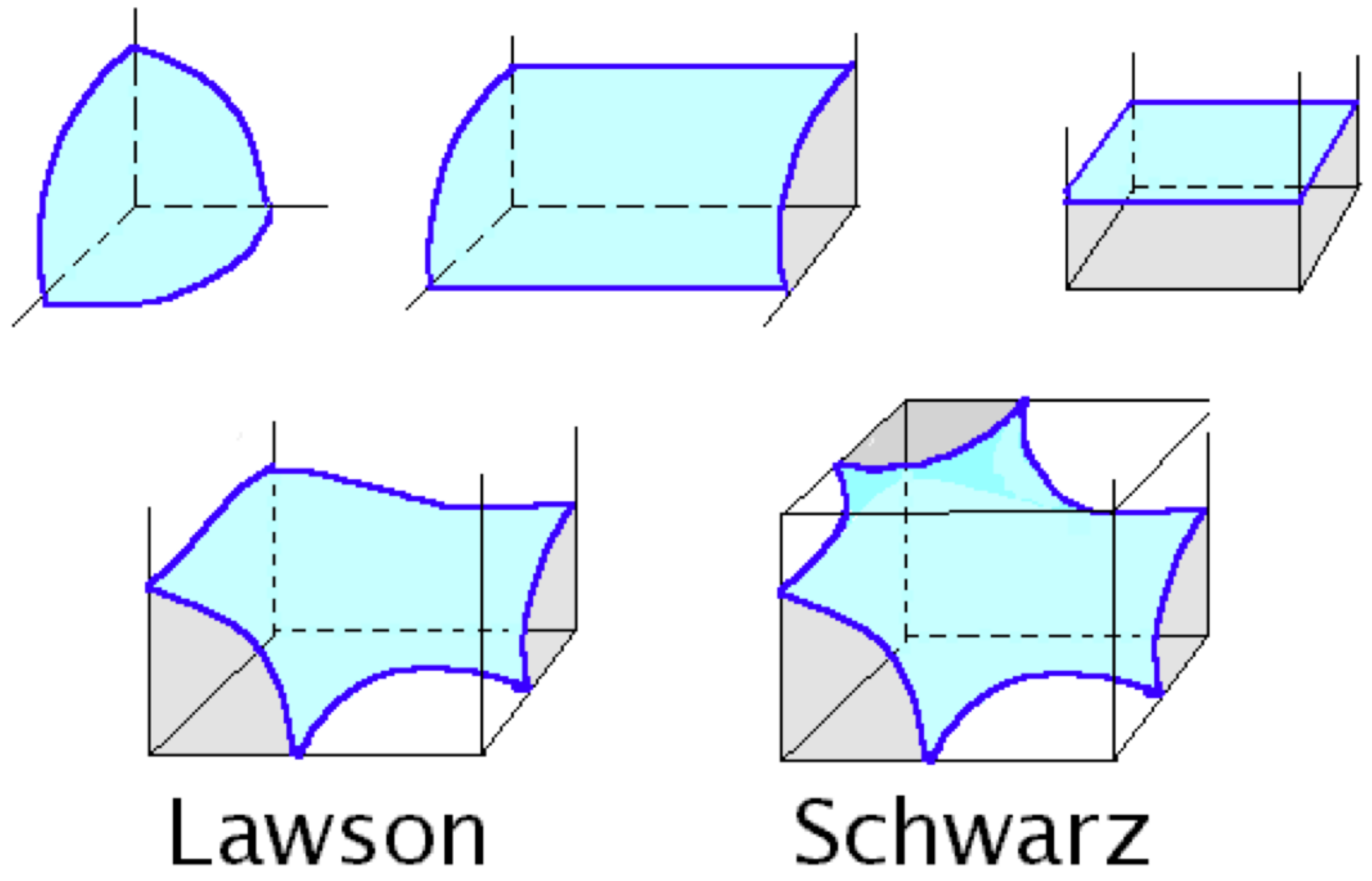}

\caption{Soap bubbles in a box \cite[Fig.9]{R1}.}

\label{intro}
\end{figure}

In Fig.~\ref{intro} one must consider that tilting the box will not change the shape of the bubble, and gravity is negligible because the soap film weighs close to zero. On top of Fig.~\ref{intro} we can easily see the three shapes that appear in a lab experiment: spherical, cylindrical and planar. Bottom right we have a surface named after the German mathematician Karl Hermann Amandus Schwarz (1843-1921) but this shape will never occur in the experiment as proved in \cite{Ri}. 

Bottom left we see another surface, this one found by Herbert Blaine Lawson (1942-). Though it has never appeared in the bubble experiment there is no mathematical proof that discards Lawson surface. Maybe this bubble could exist under special conditions but it has been an open question for two decades already. We shall resume it with details in Sect.~\ref{res}.

The isoperimetric problem is particularly important in physics because of its strong relation with stability, and this also involves the study of phenomena in non-Euclidean spaces. For example, by considering homogeneous density we can work with either volume $V$ or mass $m$. In \cite{P} the mathematical physicist Roger Penrose (1931-) combined several results and evidences regarding gravitational collapse in order to conjecture that, if $m$ is the total mass and $A$ the area of a black hole, then $4m\sqrt{\pi}\ge A$. In the theory of General Relativity this and other inequalities are called {\it isoperimetric inequalities for black holes}~\cite{Gi}. For this problem the non-Euclidean spaces of interest are the Schwarzschild and the Reissner-Nordstrom, and here we cite \cite{C} for the readers who want more details.

Differently from the example in Fig.~\ref{intro}, of course one cannot customize lab experiments with a non-Euclidean metric. But we can resort to computational simulations, and one of the mostly used softwares for this purpose is the {\it Surface Evolver}. Firstly introduced in 1989, now its most recent version is 2.70 \cite{Ken} with several applications in many Areas of Knowledge like Aerodynamics \cite{Frank}, Fluid Dynamics \cite{Chen}, \cite{Cunsolo}, \cite{Caio}, and Medicine \cite{Zanka}. It handles forces (contact, gravity, etc.), pressures, densities, $n$-dimensional spaces (including non-Euclidean), tracking of quantities and prescribed energies, among several other features.

In this paper we use {\it Evolver} to study the experiment of Fig.~\ref{intro} in a compact hyperbolic cubic box obtained by a tesselation of the whole space. Here we represent the hyperbolic 3-space $\H^3$ by the unit ball in $\R^3$ centred at the origin and endowed with the Poincar\'e metric. It is known that, up to isometries, there exists a unique tesselation of $\H^3$ by non-ideal cubes $\C$. This result is proved in Sect.~\ref{prelim}, which also includes two weak assumptions on isoperimetric regions inside $\C$. The assumptions come from the fact that any torus $\H^3/\Gamma$, given by a group of translations $\Gamma$ in $\H^3$, will be too poor in space symmetries compared with a Euclidean torus $\R^3/G$, where $G$ is a group of translations in $\R^3$. 

\section{Preliminaries}
\label{prelim}

This section is devoted to main theorems and propositions used throughout the text.

\begin{prop}
Up to isometries there exists a unique tesselation of $\H^3$ by non-ideal cubes $\C$. These cubes have dihedral angle $2\pi/5$, meaning that $n=5$ cubes meet at each edge.
\label{eu}
\end{prop}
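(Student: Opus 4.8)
The plan is to prove this in two stages: first establish the existence of a regular hyperbolic cube with dihedral angle exactly $2\pi/5$, and then show that the associated reflection group tessellates $\H^3$ and that this is the only way a non-ideal cube can tile.

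Let me think about the geometry here. In $\H^3$, unlike Euclidean space, the dihedral angle of a regular cube is not fixed at $\pi/2$. As we shrink the cube toward a point, it becomes nearly Euclidean and the dihedral angle approaches $\pi/2$. As we grow it toward an ideal cube (vertices on the sphere at infinity), the dihedral angle approaches $\pi/3$ (because in the ideal case, the link of each vertex is a Euclidean equilateral triangle arrangement—three faces meeting, and the angles of an ideal vertex figure force this). So the dihedral angle $\omega$ of a regular hyperbolic cube ranges continuously over $(\pi/3, \pi/2)$ as the size varies.

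For a cube to tessellate $\H^3$ by reflections in its faces, we need $n$ cubes to fit around each edge, meaning $n\omega = 2\pi$, i.e. $\omega = 2\pi/n$. Since $\omega \in (\pi/3, \pi/2)$, we need $2\pi/n \in (\pi/3, \pi/2)$, which gives $4 < n < 6$, so $n = 5$ and $\omega = 2\pi/5$ is the only possibility. This is the crux.

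So the proof strategy is:

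For existence, I would invoke the general theory of Coxeter groups / Andreev's theorem. The cube has the combinatorial structure of a polytope whose reflection group is generated by reflections in its faces. The relevant Coxeter diagram is the one for the $\{4,3,5\}$ honeycomb in hyperbolic space—this is one of the four regular "compact" hyperbolic honeycombs (the others being $\{5,3,4\}$, $\{5,3,5\}$, $\{3,5,3\}$). The Schläfli symbol $\{4,3,5\}$ reads: cubic cells ($\{4,3\}$ = cube), with $5$ cells around each edge. The existence of this honeycomb with finite-volume (compact) cells is classical and follows from checking that the Gram matrix of the associated Coxeter simplex is of hyperbolic (Lorentzian) signature and that the group is discrete. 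I would cite the standard classification (Schläfli, Coxeter) of regular hyperbolic honeycombs rather than reconstruct it.

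For existence of the right-sized cube, I would make the continuity argument rigorous: parametrize regular cubes by their circumradius $r \in (0, \infty)$ (or by inradius), show the dihedral angle $\omega(r)$ is a continuous strictly monotone function with the limiting values $\pi/2$ (as $r\to 0$) and $\pi/3$ (as $r \to \infty$, the ideal cube), so by the intermediate value theorem there is a unique $r$ with $\omega(r) = 2\pi/5$. Monotonicity can be argued from the fact that hyperbolic polygons/polyhedra shrink their angles as they grow. This gives both existence and uniqueness of the metric cube.

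For uniqueness of the tessellation up to isometry, the key point is that a tessellation of $\H^3$ by congruent regular cubes meeting with constant dihedral angle forces $n\omega = 2\pi$ around each edge and the same number meeting at each vertex; the edge condition already pins $\omega = 2\pi/5$, hence pins the size of the cube, and the reflection group generated is then uniquely determined up to conjugation (isometry). Rigidity of hyperbolic reflection groups (Mostow-type rigidity, or simply that the fundamental domain is determined by its dihedral angles) gives uniqueness up to isometry.

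\medskip

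The main obstacle I anticipate is making the limiting value $\pi/3$ of the dihedral angle for the ideal regular cube fully rigorous, together with strict monotonicity of $\omega(r)$. The monotonicity is geometrically clear but a clean proof requires either an explicit trigonometric formula for $\omega$ in terms of $r$ (obtainable from hyperbolic trigonometry applied to the right-angled triangles formed by the center, a face center, an edge midpoint, and a vertex) or an appeal to a general angle-monotonicity principle for growing hyperbolic polytopes. I expect the authors will either derive the explicit formula $\cos\omega = f(r)$ and read off the range, or simply cite Coxeter's classification of regular honeycombs, where the admissibility of $\{4,3,5\}$ as a compact hyperbolic honeycomb is already established and directly yields both the value $2\pi/5$ and $n=5$.
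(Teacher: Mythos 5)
Your proposal is correct and follows essentially the same route as the paper: an intermediate-value argument on the dihedral angle of a regular hyperbolic cube, which is asserted to decrease continuously and strictly from $\pi/2$ (small cubes) to $\pi/3$ (ideal cube), combined with the edge condition $n\omega=2\pi$, forcing $n=5$ and $\omega=2\pi/5$. The only cosmetic difference is in the uniqueness step --- the paper rules out $n\ne5$ by classifying the spherical vertex figure as one of the three equilateral--equiangular triangulations of $S^2$ and then invoking acuteness of the dihedral angle, whereas you rule it out directly from the range $(\pi/3,\pi/2)$ --- and both versions rest on the same monotonicity claim, which neither you nor the paper proves in detail.
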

\begin{proof}
For the uniqueness pick a vertex $p$ of the tessellation incident to $w$ edges. Now consider the unit sphere $S^2$ of the tangent space $T_p\H^3$. This space is isometric to $\H^3$, whose origin is now $p$. Of the $w$ edges each pair that belongs to a face of the cubic tesselation will make this face intersect with $S^2$. These intersections determine a triangulation of $S^2$ that is both equilateral and equiangular, hence one of only three possible types: tetrahedral ($n=3$), octahedral ($n=4$), and icosahedral ($n=5$). However, any hyperbolic cube has an acute dihedral angle. Therefore, only the case $n=5$ is possible.

Now define a ¯cube of radius $R$ by placing vertices on the 12 geodesic rays with icosahedral symmetry emanating from $p$. The dihedral angle can be computed in terms of hyperbolic trigonometry: it is a continuous and strictly decreasing function of~$R$. This function converges to $2\pi/4=\pi/2$ for $R\to 0$, and to $2\pi/6=\pi/3$ for $R\to\infty$. The {\it Intermediate Value Theorem} gives a cube $\C$ with dihedral angle $2\pi/5$.
\end{proof}

\begin{lem}
Let $D$ be the unit disk in $\R^2$ centred at the origin. Inversion with respect to a circumference of radius $r$ centred in $(c,0)$ keeps $D$ if and only if $c^2=1+r^2$.
\label{inv}
\end{lem}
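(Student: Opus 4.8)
The plan is to reduce the set-theoretic statement ``the inversion keeps $D$'' to a condition on the boundary circle, and then to settle one orientation question. Let $\iota$ denote inversion in the circle $C$ of radius $r$ about $P=(c,0)$, so that $\iota(X)=P+r^{2}(X-P)/|X-P|^{2}$ for $X\neq P$. Since $\iota$ is an involutive homeomorphism of the one-point compactification $\R^{2}\cup\{\infty\}$ that carries circles and lines to circles and lines, it can keep $D$ only if it keeps the unit circle $\partial D$; and once $\partial D$ is invariant, $\iota$ either fixes or swaps the two components of the complement of $\partial D$, so that $\iota(D)=D$ holds exactly when it fixes them. I would therefore first pin down when $\iota(\partial D)=\partial D$, and then resolve fix-versus-swap.

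For the first task I would use the standard image-of-a-circle formula. The power of $P$ with respect to $\partial D$ is $k=|P-O|^{2}-1=c^{2}-1$, where $O=(0,0)$; as long as $c\neq1$ (the degenerate case $c=1$ forces $r=0$ and sends $\partial D$ to a line), the image $\iota(\partial D)$ is a circle with centre $P+\tfrac{r^{2}}{k}(O-P)$ and radius $r^{2}/|k|$. Requiring the image centre to equal $O$ gives, after clearing denominators and using $c\neq0$, the single equation $c^{2}-1-r^{2}=0$, i.e. $c^{2}=1+r^{2}$; substituting this back makes the image radius automatically equal to $1$. This already delivers both implications at the boundary level: $\iota(\partial D)=\partial D$ precisely when $c^{2}=1+r^{2}$. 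To upgrade this to $\iota(D)=D$, assume $c^{2}=1+r^{2}$ (so $c>1$ and $P$ lies outside $\ovl D$) and test the single interior point $O$: one computes $\iota(O)=\big(\tfrac{c^{2}-r^{2}}{c},\,0\big)=(1/c,0)$, which lies in $D$ since $c>1$. Hence $\iota$ fixes, rather than swaps, the two components, and $\iota(D)=D$. Conversely, if $\iota$ keeps $D$ it keeps $\partial D$, so the boundary computation forces $c^{2}=1+r^{2}$.

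The calculation is elementary, so there is no deep obstacle; the only points needing care are the bookkeeping of the last step — ruling out the orientation-reversing possibility in which $\partial D$ is invariant but $D$ is carried onto its exterior — and the quarantining of the degenerate configurations $c=0$ and $c=1$, where either the geometry or the formula breaks down. A conceptually cleaner alternative, which I would record as a remark, is to note that a circle is mapped to itself by $\iota$ exactly when it meets $C$ orthogonally, and that two circles with centres at distance $d$ and radii $\rho_{1},\rho_{2}$ are orthogonal iff the Pythagorean relation $d^{2}=\rho_{1}^{2}+\rho_{2}^{2}$ holds; with $d=c$, $\rho_{1}=1$ and $\rho_{2}=r$ this is exactly $c^{2}=1+r^{2}$, which also explains why such inversions are precisely the reflections of the Poincar\'e disc model.
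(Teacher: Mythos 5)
Your proof is correct, and its overall skeleton matches the paper's: establish that invariance of $D$ forces invariance of $\partial D$, extract the condition $c^2=1+r^2$ from the boundary, and then settle the interior-versus-exterior question by checking that the origin maps to $(1/c,0)\in D$ --- the paper performs exactly this last check via $0\mapsto 1/c$. Where you diverge is in how the boundary condition is obtained. The paper works in the complex plane with $z\mapsto r^2/(\bar z-c)+c$: for necessity it observes that the map must interchange $1$ and $-1$ (the two points of $\partial D$ on the axis of symmetry), which gives $c^2=1+r^2$ in one line; for sufficiency it rewrites the map as $(c\bar z-1)/(\bar z-c)$ and recognizes a standard anti-Möbius transformation preserving the unit circle. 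You instead invoke the general centre-and-radius formula for the image of a circle under inversion, which is heavier computationally but handles necessity and sufficiency for $\partial D$ in a single stroke and forces you to be explicit about the degenerate configurations $c=0$ and $c=1$ (which the paper passes over silently, as it does the justification that $\pm1$ must be \emph{interchanged} rather than both fixed --- a point your component argument covers cleanly). Your closing remark, that the condition $c^2=1+r^2$ is precisely orthogonality of the two circles and hence characterizes the reflections of the Poincar\'e disc, is not in the paper but is exactly the conceptual content being used afterwards when the lemma is applied to build the isometries $\sigma_i$; it would be a worthwhile addition.
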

\begin{proof}

Fig.~\ref{inv} illustrates $D$ in light grey and the circumference $c+r\,e^{i\theta}$, $0\le\theta<2\pi$. By taking $\R^2$ as the complex plane we know that such inversion is
\BE
   z\mapsto\frac{r^2}{\bar{z}-c}+c.
   \label{moe}
\EE

\begin{figure}[ht!]
\centering
 \includegraphics[scale=0.5]{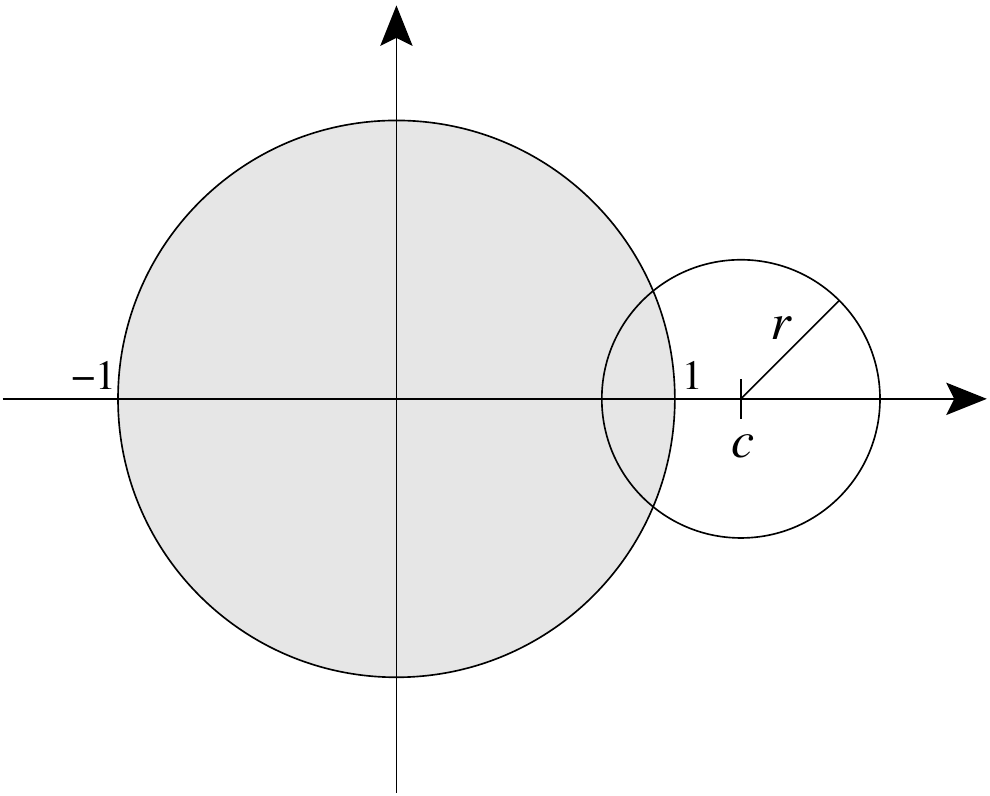}
 \caption{The circle $D$ and the circumference of inversion.}
\label{inv}
\end{figure} 

In order to keep $D$ it must interchange $1$ and $-1$, hence $c^2=1+r^2$. Notice that~(\ref{moe}) takes the origin to $1/c\in D$. Conversely, if $c^2=1+r^2$ then~(\ref{moe}) becomes $(c\bar{z}-1)/(\bar{z}-c)$, which obviously keeps $\deh D$. But since we wave $0\mapsto1/c<1$ then it keeps $D$.
\end{proof}
  
Let us define a lattice of isometries in $\H^3$ which keeps the cubic tessellation. Consider the three pairs of opposite faces of $\C$. For each such pair there is a mirror reflection $\sigma_1$, $\sigma_2$, $\sigma_3$ in one of the faces, and there are also mirror reflections $\tau_1$, $\tau_2$, $\tau_3$ which flip only the pair of opposite faces (and fix the midpoint of $\C$). The latters are represented by mirror reflections in the three coordinate planes in Fig.~\ref{3planes&tiling45}(a). From Lemma~\ref{inv} the formers are represented by spherical inversions such as 
\[
   q\mapsto r^2\cdot\frac{q-(c,0,0)}{||q-(c,0,0)||^2}+(c,0,0),
\]
where $c=(\sqrt{5}+2)^{1/2}$ and $r=(\sqrt{5}+1)^{1/2}$. These are the values that keep the tiling $\{4,5\}$ depicted in Fig.~\ref{3planes&tiling45}(b). Now we compose $T_i:=\tau_i\circ\sigma_i$ to obtain a ``translation'', namely an isometry of hyperbolic type. Finally we let $\Gamma:=\langle T_1,T_2,T_3\rangle$ and $\T:=\H^3/\Gamma$. Then $\C$ is a fundamental domain for $\T$.

\begin{figure}[ht!]
\centering
 \begin{subfigure}[ht!]{0.5\textwidth}
 \centering
 \includegraphics[scale=0.49]{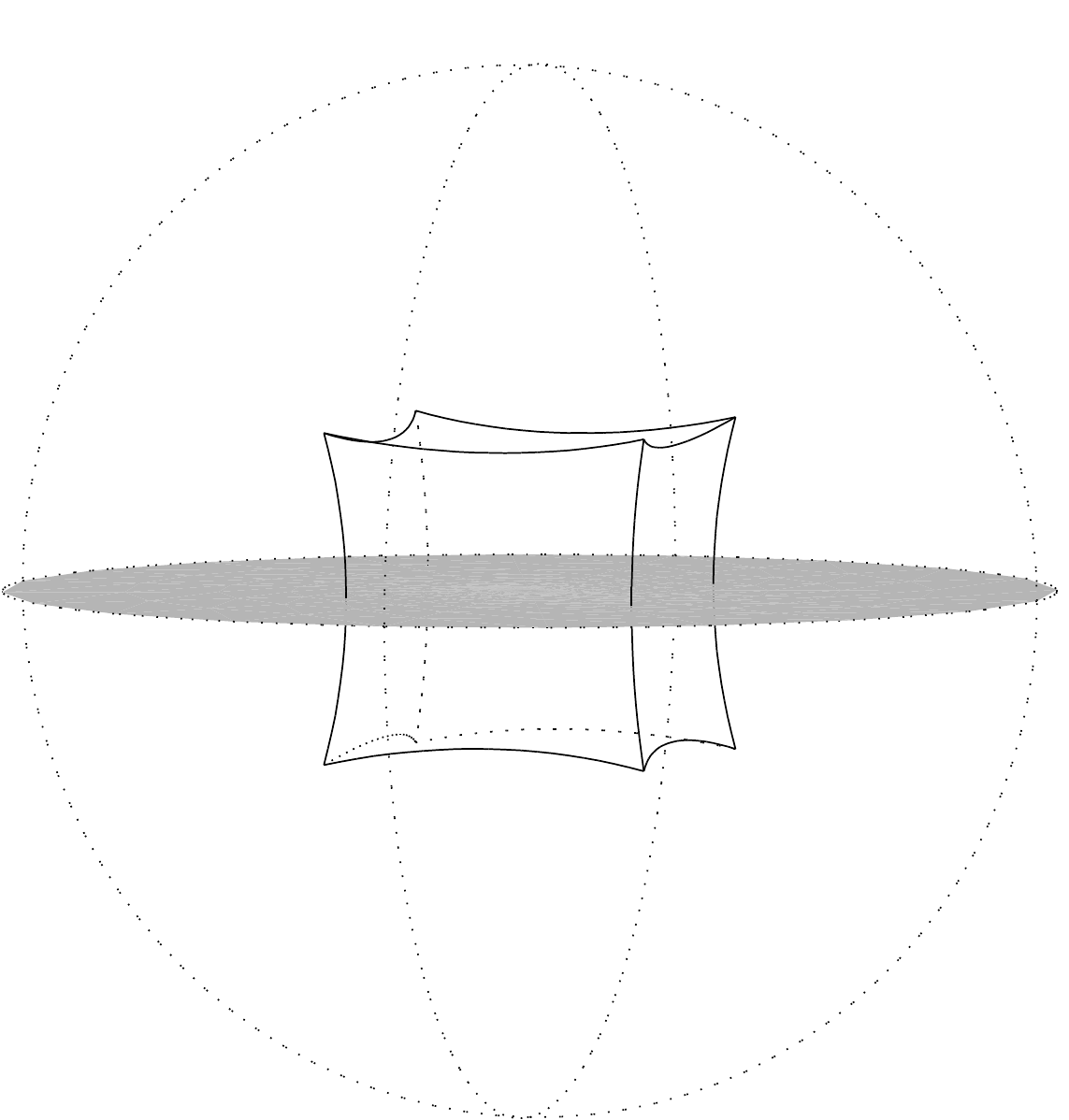}
 \caption{}
 \end{subfigure}%
 ~ 
 \begin{subfigure}[ht!]{0.5\textwidth}
 \centering
 \includegraphics[scale=0.34]{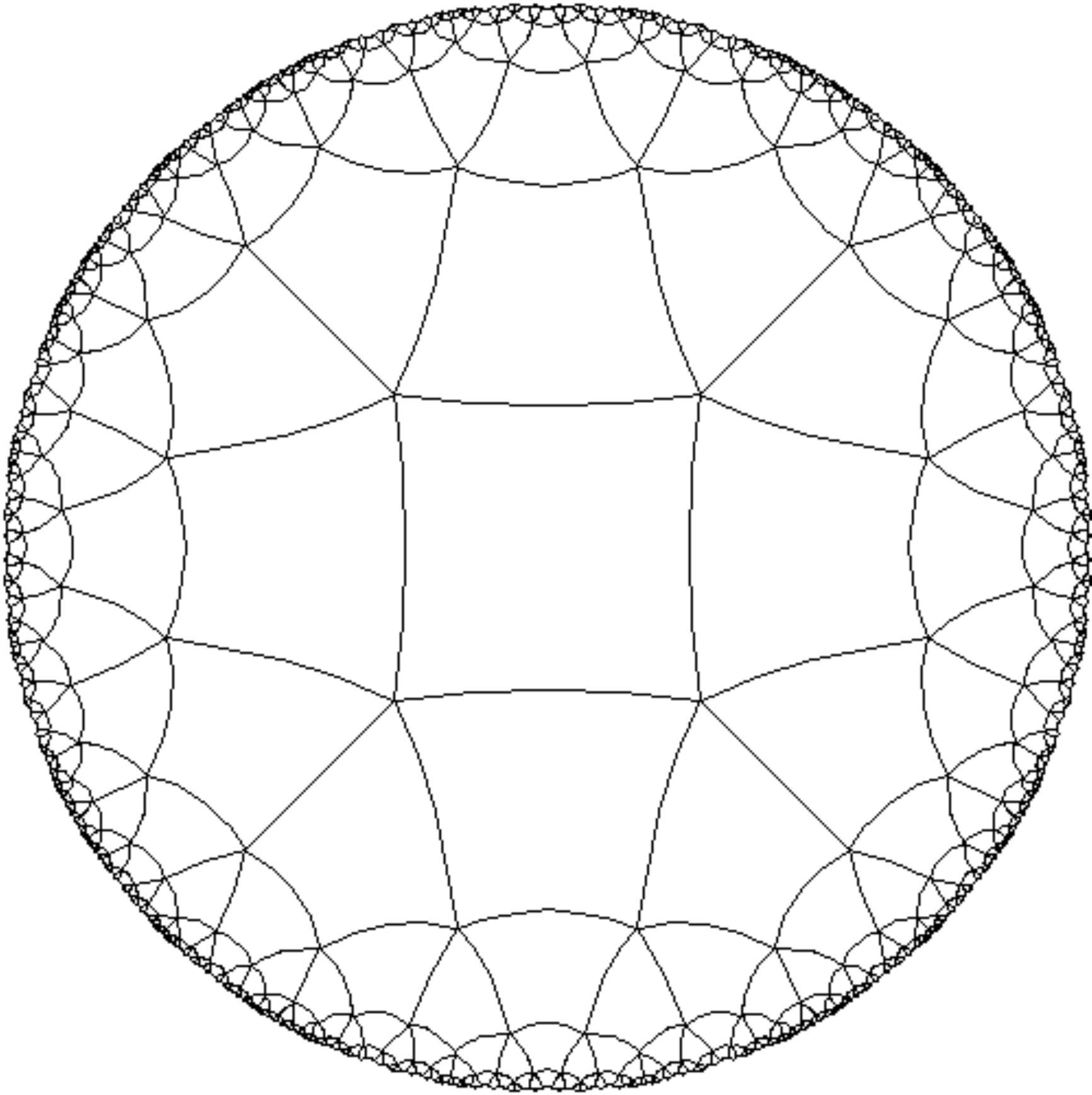}
 \caption{}
 \end{subfigure}
\caption{The cube $\C$ (a); the tiling $\{4,5\}$ (b).\protect\footnotemark}
\label{3planes&tiling45}
\end{figure} 

By means of the {\it Alexandrov Reflection Principle}~\cite{Al} we can reduce our study of the isoperimetric problem to a symmetric eighth $\B\subset\C$. In the Poincar\'e model if $\C$ is as in Fig.~\ref{3planes&tiling45}(a) then $\B:=\C\cap\{x_1\ge 0\}\cap\{x_2\ge 0\}\cap\{x_3\ge 0\}$. If $V$ denotes hyperbolic volume then $V(\C)\cong1.723$ and $V(\B)\cong0.215$. These values were obtained through the {\it Surface Evolver}, whose details are given in Sect.~\ref{ev}.

\footnotetext{From \url{https://mathcs.clarku.edu/~djoyce/poincare/tilings.html}}

The conformal metric in $\H^3$ is given by
\BE
   \frac{4\,I_3}{(1-x_1^2-x_2^2-x_3^2)^2},
   \label{confm}
\EE
where $I_3$ is the $3\times3$ identity matrix. For instance, if we consider a Euclidean radius $\eps\in(0,1)$ the area of and volume inside $S_\eps\subset\H^3$ are given by 
\BE
   A(S_\eps)=\frac{16\pi\eps^2}{(1-\eps^2)^2}
   \hspace{1cm}{\rm and}\hspace{1cm}
   V(S_\eps)=2\pi\biggl[\frac{2\eps(1+\eps^2)}{(1-\eps^2)^2}+\ln\frac{1-\eps}{1+\eps}\biggl],
   \label{av}
\EE
respectively. Notice that both $A(S_\eps)$ and $V(S_\eps)$ are strictly increasing with $\eps$. As mentioned at the Introduction, in this work we adopt two assumptions on isoperimetric regions in $\B$.

\begin{asp}
Any isoperimetric region in $\B$ intersects the three coordinate planes.
\label{assp1}
\end{asp}
 
As a matter of fact we believe that Assumption~\ref{assp1} is always true but were not able to prove this fact yet. Up to ambient isometries it is valid in Euclidean three-dimensional boxes, as proved in~\cite{Ri}. For now here is an example that motivates Assumption~\ref{assp1}:

\begin{expl}
The set $T_1:=S_\eps\cap\B$ encloses volume $V(S_\eps)/8$. Now let $S$ be a sphere centred at the upper right corner of the central square in Fig.~\ref{3planes&tiling45}(b), which in Euclidean coordinates corresponds to $(\cfk,\cfk,0)$ where
\BE
   \cfk=\frac{\sqrt{\smash[b]{\sqrt{5}+2}}-\sqrt[4]{5}}{2}.
   \label{corn}
\EE
If $T_2:=S\cap\B$ encloses the {\it same} volume $V(S_\eps)/8$ then $V(S)=5V(S_\eps)/4$ and therefore $A(T_2)>A(T_1)$.
\label{exsph}
\end{expl}

Now we present a strong result that will be widely used in our work. According to \cite{R1}, from \cite{A}, \cite{GMT}, \cite{G}, \cite{M} we have:

\begin{thm} Suppose $M^3\subset\C$ is compact and $\deh M$ is either empty or piecewise smooth. Then for any $t\in(0,V(M))$ there exists a compact domain $\Om\subset M$ such that $\Si=\deh\Om\setminus\deh M$ minimizes area among regions of volume~$t$. Moreover, the boundary of any minimizing region is a smooth embedded surface with constant mean curvature and, if $\deh M\cap\Si\ne\emptyset$, then $\Si$ meets $\deh M$ orthogonally.
\label{mainthm}
\end{thm}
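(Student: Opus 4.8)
The plan is to separate the two assertions --- existence of a minimizer and its regularity --- and in each case assemble the standard machinery of geometric measure theory, which applies because $\C$, and hence any admissible $M$, is compact. For existence I would work in the class of sets of finite perimeter contained in $M$, replacing the area of $\Si=\deh\Om\setminus\deh M$ by the perimeter of $\Om$ relative to the interior of $M$. For fixed $t\in(0,V(M))$ the constraint $V(\Om)=t$ defines a nonempty class that is closed under $L^1$ convergence, since volume is $L^1$-continuous; the relative perimeter is $L^1$-lower semicontinuous; and compactness of $M$ gives, via the $BV$ compactness theorem, an $L^1$-convergent subsequence of any minimizing sequence whose limit still has volume $t$. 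Lower semicontinuity then makes this limit a perimeter minimizer, which is the desired $\Om$.

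The two analytic features --- constant mean curvature and orthogonal contact --- I would read off from the first variation. Deforming $\Om$ by a compactly supported vector field subject to the volume constraint forces the first variation of perimeter to vanish on all volume-preserving variations, so by introducing a Lagrange multiplier the weak mean curvature of $\Si$ equals that constant multiplier $H$. Admitting variations whose support reaches $\deh M$ produces a boundary term along $\deh M\cap\Si$, and stationarity forces it to vanish; this is precisely the orthogonality condition.

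For regularity I would split into interior and boundary parts. In the interior a volume-constrained minimizer is an almost minimizer of perimeter, so Almgren's theory (with the interior estimates of \cite{A}, \cite{GMT}) shows the reduced boundary is a smooth embedded hypersurface of constant mean curvature, with a singular set of codimension at least $7$ inside $\Si$; since $\Si$ is two-dimensional there is no room for such a set and $\Si$ is smooth in the interior of $M$. Along a smooth face of $\deh M$ I would invoke the free-boundary regularity theory of Gr\"uter and Jost \cite{G}, \cite{M}, which yields smoothness up to the face and upgrades the weak orthogonality above to the classical one.

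The hard part will be the behaviour of $\Si$ where $\deh M$ fails to be smooth. Because $M\subset\C$ and the cube carries edges and a corner, ``piecewise smooth'' means $\deh M$ has lower-dimensional strata along which Gr\"uter--Jost does not directly apply, and one must analyse the tangent cones of the minimizer there to exclude singularities; this is where the cited works carry the technical weight. In our concrete setting this step is aided by the special geometry: since the dihedral angle of $\C$ is $2\pi/5$ and exactly $n=5$ cubes meet at each edge (Proposition~\ref{eu}), reflecting $\Si$ successively in the faces adjacent to an edge closes up after five copies into a minimizer without boundary near the edge, whence interior regularity again forces $\Si$ to be smooth and singularity-free. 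Verifying that each reflected surface remains a genuine minimizer rather than merely stationary, and controlling the single corner of $\B$, is the most delicate point of the argument.
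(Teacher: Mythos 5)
The paper does not prove Theorem~\ref{mainthm} at all: it is quoted verbatim from Ros's survey \cite{R1}, which in turn attributes the existence and regularity statements to \cite{A}, \cite{GMT}, \cite{G}, \cite{M}. So there is no in-paper argument to compare against; what you have written is, in outline, exactly the route those references take --- sets of finite perimeter and $BV$ compactness plus lower semicontinuity for existence, the first variation with a Lagrange multiplier for constant mean curvature and orthogonality, Almgren/Gonzalez--Massari--Tamanini for interior regularity (with the singular set empty because the ambient dimension is $3<8$), and Gr\"uter-type free-boundary regularity along the smooth faces of $\deh M$. As a summary of the standard machinery the sketch is sound.

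The one step you should not let stand as written is the reflection argument at the edges of $\C$. A dihedral angle of $2\pi/5$ is not of the form $\pi/m$, so the group generated by the reflections in the two faces meeting along an edge is dihedral of order $10$ with fundamental wedge of angle $\pi/5$; the five cubes around the edge are permuted by the order-$5$ rotation, and the composition of the five successive face reflections is itself a reflection in a plane through the edge, not the identity. Hence the five reflected copies of $\Si$ close up smoothly only if $\Si$ happens to carry that extra mirror symmetry, which a general minimizer need not. In the paper's actual application ($M=\B$) the standard even-order reflection trick does apply along the coordinate axes and along the intersections of the coordinate planes with the faces of $\C$, since those dihedral angles are $\pi/2$; but along the genuine edges of $\C$ the regularity of $\Si$ is not obtained this way, and indeed the cited references only give regularity up to the smooth strata of $\deh M$ together with the orthogonality condition there. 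If you want a complete proof you must either handle those edge points by a tangent-cone/monotonicity analysis or note explicitly that the theorem, as used in the paper, only asserts regularity and orthogonality away from the non-smooth part of $\deh M$.
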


Any $\Si$ that minimizes area under volume constraint has {\it constant mean curvature} (CMC). For instance, see \cite{BC} to check this well-known property. In their turn CMC surfaces follow the {\it maximum principle} (see \cite{GT}), which is the key to conclude that $\deh M\perp\Si$ whenever $\deh M\cap\Si\ne\emptyset$. The property of orthogonal intersections will be used extensively in this work. The following result is an important clue on Assumption~\ref{assp1}:

\begin{prop}
Let $\Om$ be an isoperimetric region of $\B$ and $\Si=\deh\Om\setminus\deh\B$. Then $\deh\Om\setminus\Si$ cannot contain open subsets of only zero, one, or two incident faces of $\deh\B$. In particular we have $\deh\B\cap\Si\ne\emptyset$.  
\label{clue}
\end{prop}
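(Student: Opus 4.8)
The plan is to argue by contradiction, combining Schwarz reflection with the area comparison foreshadowed in Example~\ref{exsph}. The first observation I would record is that every face of $\deh\B$ is totally geodesic: the three coordinate planes (the mirrors of $\tau_1,\tau_2,\tau_3$) are flat geodesic planes through the origin, while the three cube faces (the mirrors of $\sigma_1,\sigma_2,\sigma_3$) are, by Lemma~\ref{inv}, spheres orthogonal to $\deh\H^3$ and hence geodesic hyperbolic planes. By Theorem~\ref{mainthm}, $\Si$ is a smooth embedded CMC surface meeting each face it touches orthogonally. Therefore, across any such face, $\Si$ extends by the corresponding hyperbolic reflection to a CMC surface that is smooth, since orthogonal contact upgrades the $C^1$ matching to full regularity. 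Since a closed embedded CMC surface in $\H^3$ is a geodesic sphere by Alexandrov's theorem, whenever the reflected surface closes up it must be such a sphere, whose area and enclosed volume are given by~(\ref{av}).

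Next I would set up the comparison engine. Let $a(V)$ denote the area of the geodesic sphere enclosing volume $V$; from~(\ref{av}) one checks that $a(V)/V$ is strictly decreasing, so $a(kt)/k$ is strictly decreasing in $k$. The reference competitor is the corner octant-piece $T_1=S_\eps\cap\B$ of Example~\ref{exsph}: it touches all three coordinate planes, encloses volume $t$, and has free area $a(8t)/8$. Now suppose, for contradiction, that $\deh\Om\setminus\Si$ is confined to $m\le 2$ mutually incident faces meeting pairwise at right angles, namely $m=0$ (an interior bubble), $m=1$, or $m=2$ with the two faces being two coordinate planes or a coordinate plane together with a perpendicular cube face. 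Reflecting $\Om$ across these faces produces $k=2^m\in\{1,2,4\}$ congruent copies closing up around the right-angle edge, and $\Si$ reflects to a geodesic sphere of volume $kt$; hence the free area equals $a(kt)/k$. But $a(8t)/8<a(kt)/k$, so $T_1$ has strictly smaller free area for the same enclosed volume, contradicting minimality. Excluding $m=0$ in particular shows the contact set is non-empty, whence its relative boundary $\Si\cap\deh\B$ is non-empty.

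The only remaining possibility is that $\deh\Om\setminus\Si$ is confined to two incident cube faces. Here the reflection argument changes character because, by Proposition~\ref{eu}, two adjacent cube faces meet at dihedral angle $2\pi/5$, which is not of the form $\pi/k$. The reflections in these two faces generate a dihedral group of order ten, whose images of the wedge of $\C$ at the common edge cover a neighbourhood of that edge twice; equivalently, one of the reflection planes of this group passes through the interior of $\B$. Consequently the reflected images of $\Si$ cannot assemble into a single embedded surface, and an embedded CMC extension would be forced to be symmetric across an interior plane of $\B$ that $\Si$ does not touch, which is impossible. This would rule out confinement to two cube faces and complete the proof.

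I expect the genuine obstacle to be exactly this last case: converting the overlap phenomenon at the $2\pi/5$ edge into a clean contradiction, in particular when the contact regions on the two cube faces stay away from their common edge, so that the incompatibility is not visible from a single local reflection. A secondary technical point is the admissibility of $T_1$ over the whole range $t\in(0,V(\B))$: for large $t$ the octant-sphere protrudes through the cube faces and must be replaced by the isoperimetric configuration touching all three coordinate planes, whose free area is still at most $a(8t)/8$ by the same monotonicity. Finally, the strict monotonicity of $a(V)/V$ deserves to be checked explicitly from~(\ref{av}), though this is routine.
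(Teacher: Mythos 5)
Your route---Schwarz reflection across the contacted faces, Alexandrov's theorem to identify the reflected surface as a geodesic sphere, and a comparison of $a(kt)/k$ against the octant competitor $T_1$---is genuinely different from the paper's argument. The paper instead slides $\Om$ by a one-parameter family of hyperbolic isometries $\phi_t$ preserving the (at most two incident) faces that $\Om$ contacts, takes the supremum $T$ of times with $\phi_t(\Om)\subset\B$, and observes that $\phi_T(\Om)$ is still minimizing yet touches $\deh\B$ tangentially, violating the orthogonality clause of Theorem~\ref{mainthm}. That argument needs no competitor, no area formula, and---crucially---no assumption on the dihedral angles between the contacted faces, since a hyperbolic translation along the common edge geodesic preserves any two incident geodesic faces regardless of the angle between them.

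This last point is exactly where your proof has a genuine gap, and you have correctly located it yourself: when the two incident faces are both cube faces (the mirrors of $\sigma_i,\sigma_j$), Proposition~\ref{eu} gives dihedral angle $2\pi/5$, the group generated by the two reflections has order ten, the images of the wedge wrap twice around the edge, and the reflected copies of $\Om$ overlap. Alexandrov's theorem then simply does not apply, and your substitute claim (``the reflected images cannot assemble into a single embedded surface, \dots which is impossible'') is an assertion, not an argument---in particular it says nothing when the contact regions stay away from the common edge, so that no local overlap is ever produced. A second, lesser gap is the admissibility of $T_1$: it requires $8t<V(S_{\Eps})\cong0.67$, i.e.\ $t\lesssim0.083$, while the problem runs up to $V(\B)/2\cong0.108$; your patch for large $t$ (``the isoperimetric configuration touching all three coordinate planes \dots has free area at most $a(8t)/8$'') is circular, since under the contradiction hypothesis you do not yet know that such a configuration is available or better. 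Where your argument does apply ($m\le1$, or $m=2$ with at least one coordinate plane, and $t$ small), the reflection--plus--monotonicity comparison is correct and even quantitatively stronger than the paper's tangency contradiction, and the monotonicity of $a(V)/V$ does follow from~(\ref{av}); but as written the proposition is not fully proved.
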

\begin{proof}
Let us apply Theorem~\ref{mainthm} to $M=\B$. By contradiction, consider a continuous family $\{\phi_t:t\in[0,1]\}$ of hyperbolic isometries that leave invariant the zero, one, or two coordinate planes. Let $\phi_0$ be the identity and $\phi_1(\B)\cap\B=\emptyset$. Now if $\Om$ intersects only zero, one, or two incident faces, then we consider for the respective family $\phi_t$ the supremum $T$ of the values $t$ such that $\phi_t(\Om)\subset\B$. Since $\phi_T$ keeps $V(\Om)$ then $\phi_T(\Om)$ is also minimizing. However, $\phi_T(\Om)$ touches $\deh\B$ tangentially, and so violates $\deh M\perp\Si$.
\end{proof}

Because of Proposition~\ref{clue} successive reflections of $\Si$ in the coordinate planes will result in a potential solution of the isoperimetric problem for $t=8V(\Om)$ in the torus $\T$ (equivalently in its fundamental domain $\C$).

\begin{asp}
Let $\Si$ be as in Proposition~\ref{clue}. Then its intersection with any coordinate plane is either empty or a two-dimensional connected graph.
\label{assp2}
\end{asp}

In this paper we analyse $\Si$ under Assumptions~\ref{assp1} and~\ref{assp2}. A little reflection shows that for $1\le i<j\le3$ there are only four non-empty possible connected topological cases depicted in Fig.~\ref{cases}. Notice that Fig.~\ref{cases}(b) is a graph in both directions since our geometry is hyperbolic. Moreover, we shall see that the numerical candidates for this case are either unduloids or a pair of equidistant tori. 

\begin{figure}[ht!]
\center
\includegraphics[scale=0.65]{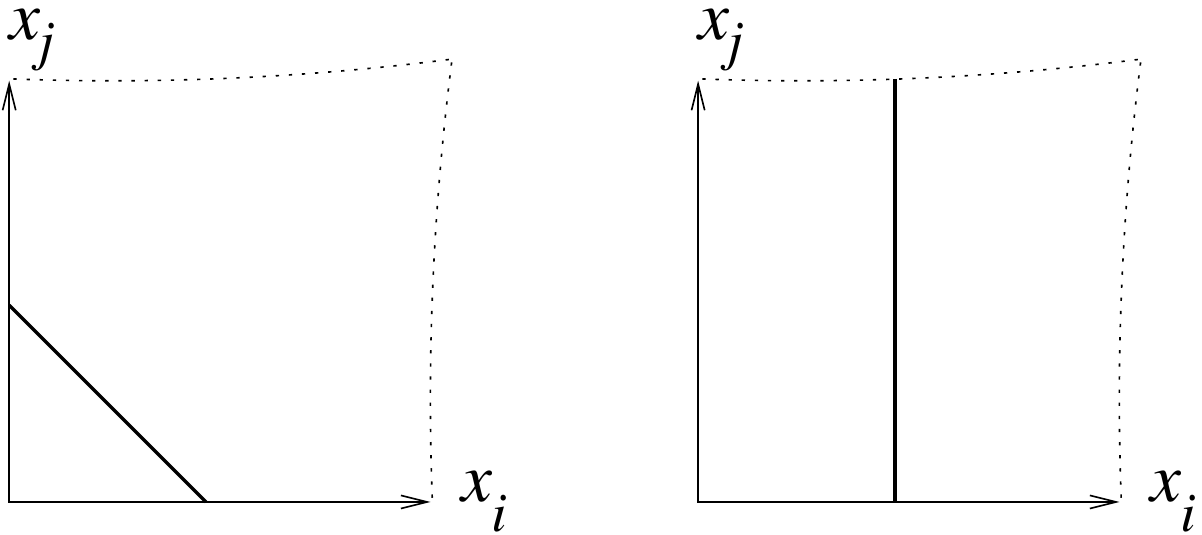}

(a)\hspace{4cm}(b)\\
\includegraphics[scale=0.65]{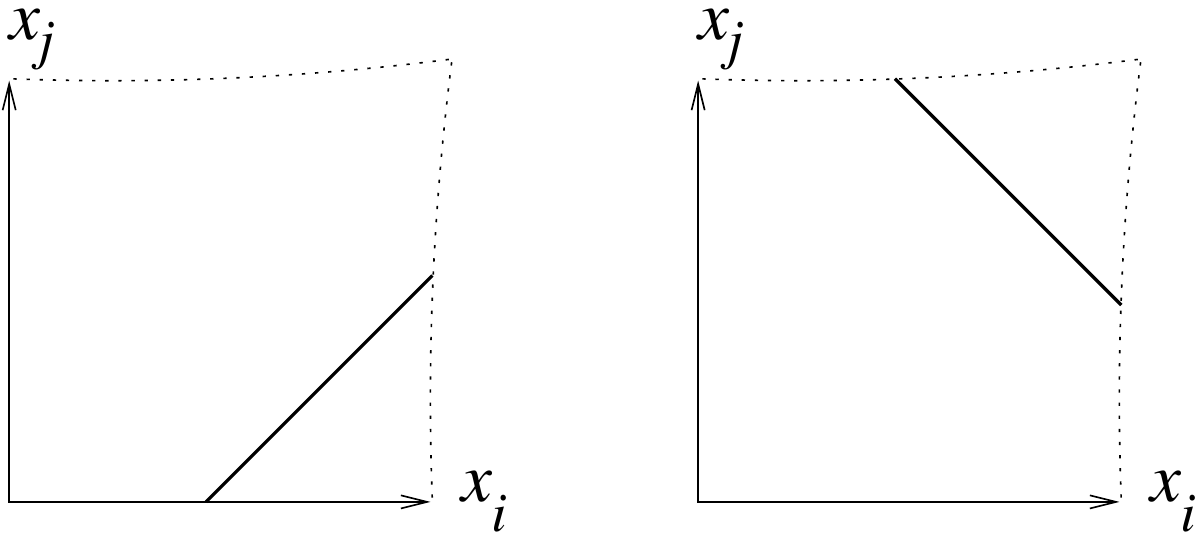}

(c)\hspace{4cm}(d)
\caption{The four possible non-empty graph types in $Ox_ix_j$.}
\label{cases}
\end{figure}

Not all combinations of the three axes are feasible. Table~\ref{candid} shows the eight possible combinations and the name we attribute for each one. The empty graph is called ``e'', the triad of letters $\ell_1\ell_2\ell_3$ corresponds to $Ox_1x_2$, $Ox_1x_3$ and $Ox_2x_3$ in this order, respectively. More details on Table~\ref{candid} will be given in Sect.~\ref{res}. There we explain why $\acc$ is called {\it inverted Lawson}, and why some cases like $\bce$, $\ccd$ and $\dde$ are not included. Notice that $\ell_1\ell_2\ell_3$ is always in alphabetical order because of congruence. For example, {\tt acb} is congruent to $\abc$.

The isoperimetric problem considers $V(\Om)\le V(\B)/2\cong0.108$. We could also take the origin $O\in\Om$ but for the sake of visibility some cases in Table~\ref{candid} are studied for $O\not\in\Om$.

\begin{table}
\centering
\caption{Triad of letters and some known examples.}
\begin{tabular}{|c|c|}\hline
Combination & Surface  \\ \hline\hline
$\aaa$      & sphere   \\
$\abb$      & unduloid \\
$\abc$      & \\
$\acc$      & inverted Lawson \\
$\bbd$      & Lawson \\
$\bbe$      & pair of tori \\
$\bcd$      & \\
$\ddd$      & Schwarz \\ \hline
\end{tabular}
\label{candid}
\end{table}

Before going ahead one should notice that Assumption~\ref{assp2} requires the graph to be connected. Otherwise one should consider another two extra cases depicted in Fig.~\ref{extra}. But later we shall see that they perform worse than the candidates listed in Table~\ref{candid}, which thus omitted them. 

\begin{figure}[ht!]
\center
\includegraphics[scale=0.65]{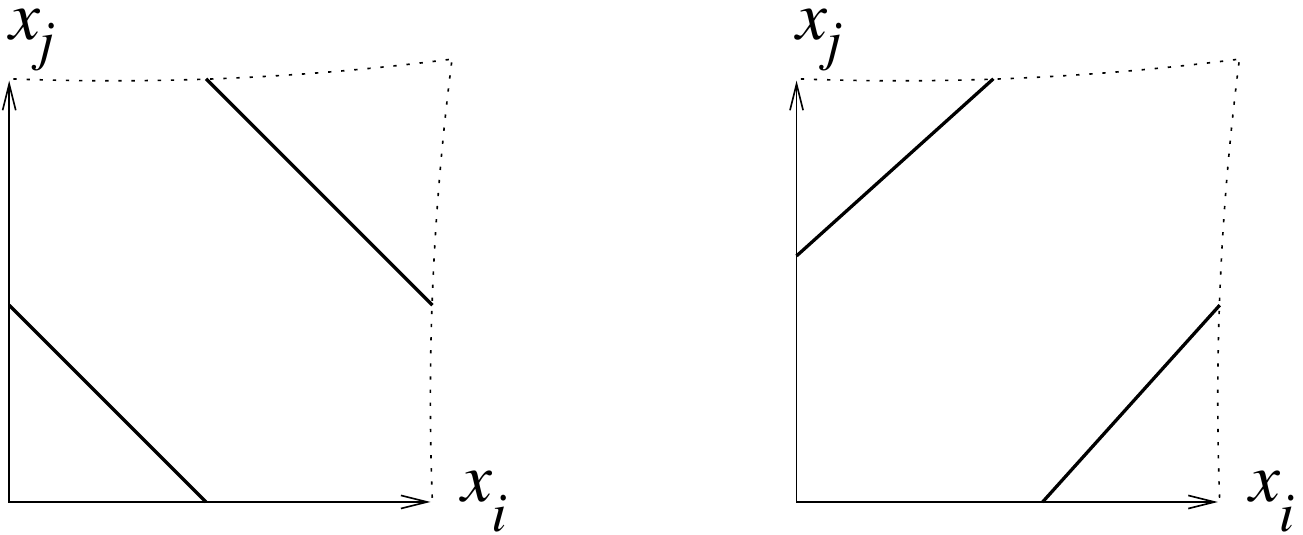}

\caption{The two possible non-connected graph types in $Ox_ix_j$.}
\label{extra}
\end{figure}

\section{Evolver Datafiles}
\label{ev}

In order to study each case numerically we make use of the {\it Surface Evolver} \cite{B1}, \cite{B2}. For lengths and areas {\it Evolver} computes Riemannian values through the metric given in the datafile. However, volumes must be achieved as a declared quantity and the computation does not use the given metric any longer. Therefore, we implemented the submanifold quantity {\tt sbmvol} in order to have
\BE
   \int_M1dV=\int_M\frac{8\,dv}{(1-x_1^2-x_2^2-x_3^2)^3}, 
   \label{sbmvol}
\EE
where $dV$ and $dv$ are the hyperbolic and Euclidean elements of volume, respectively.

But {\it Evolver} only computes {\it surface integrals} (line integrals are in fact sums of thin strips of surface). So we need to introduce a vector field {\tt (q1,q2,q3)} in such a way that its divergence is $8/(1-x_1^2-x_2^2-x_3^2)^3$. By setting $\rho=x_3/\sqrt{1-x_1^2-x_2^2}$ we may first compute
\BE
   \int\frac{8d\rho}{(1-\rho^2)^3}=
   \frac{2\rho}{(1-\rho^2)^2}+\frac{3\rho}{1-\rho^2}+\frac{3}{2}\ln\frac{1+\rho}{1-\rho}.
\EE

Hence, if we take {\tt q1}, {\tt q2} as identically zero then {\tt q3} is given by 
\[
   \int\frac{8dx_3}{(1-x_1^2-x_2^2-x_3^2)^3}=\frac{2x_3}{(1-x_1^2-x_2^2-x_3^2)^2(1-x_1^2-x_2^2)}+
\]
\BE
   \frac{3x_3}{(1-x_1^2-x_2^2-x_3^2)(1-x_1^2-x_2^2)^2}+
   \frac{3/2}{(1-x_1^2-x_2^2)^{5/2}}\ln\biggl|\frac{\sqrt{1-x_1^2-x_2^2}+x_3}{\sqrt{1-x_1^2-x_2^2}-x_3}\biggl|.
\label{q3}
\EE

In this way {\tt (q1,q2,q3)} are not symmetric. A better choice is to compute {\tt q1} and {\tt q2} as done for {\tt q3}, and then take one third of each expression. That is what we use in the datafiles.

Moreover, notice that Assumption~\ref{assp1} and Theorem~\ref{mainthm} enable us to study the isoperimetric regions $\Om\subset\B$ in such a way that the computation of the volume $V(\Om)$ will not use $\deh\Om\cap\deh\B$. For instance, the face of $\Si$ on ${x_3=0}$ makes (\ref{q3}) vanish, and the analogous holds for the symmetric {\tt (q1,q2,q3)} used in our datafiles. Hence we can work with $\Si=\deh\Om\setminus\deh\B$, as described in Proposition~\ref{clue}.

Inside $\C$ the sphere $S_\eps$ centred at the origin cannot have $\eps\ge\Eps=c-r\cong0.26$. Namely, for the first line of Table~\ref{candid} our simulation takes $\eps\in(0,\Eps)$. With {\it Evolver} 2.70 and {\it Geomview} 1.9.4 we obtained Fig.~\ref{aaa}. It depicts the numerical surface, whereas Fig.~\ref{aaa}(b) shows $V\times A$ for~(\ref{av}) and Fig.~\ref{aaa}(c) the corresponding $V\times\Delta A$, where $\Delta A$ is the difference between numerical and theoretical values of area.

In our simulation {\it Evolver} reached numerical values of volume that coincide with the theoretical ones up to the 8th decimal. Hence we consider them as identical. In Fig.~\ref{aaa}(c) we have $\Delta A\le 0.00163$, and the abrupt change at $V\cong0.02$ just means that we refined the triangulation for that volume. This is to guarantee that $A/F\le0.01$, where $F$ is the number of triangles. The fact that $V\le V_{\Eps}\cong0.083$ can be seen in Fig.~\ref{aaa}(b), namely lesser than $V(\B)/2\cong0.108$, which means a great difference from the Euclidean case. Indeed, if that eighth of cube were Euclidean the extreme $\aaa$ would have more than half of its volume.

\begin{figure}[ht!]
\center
\includegraphics[scale=0.27]{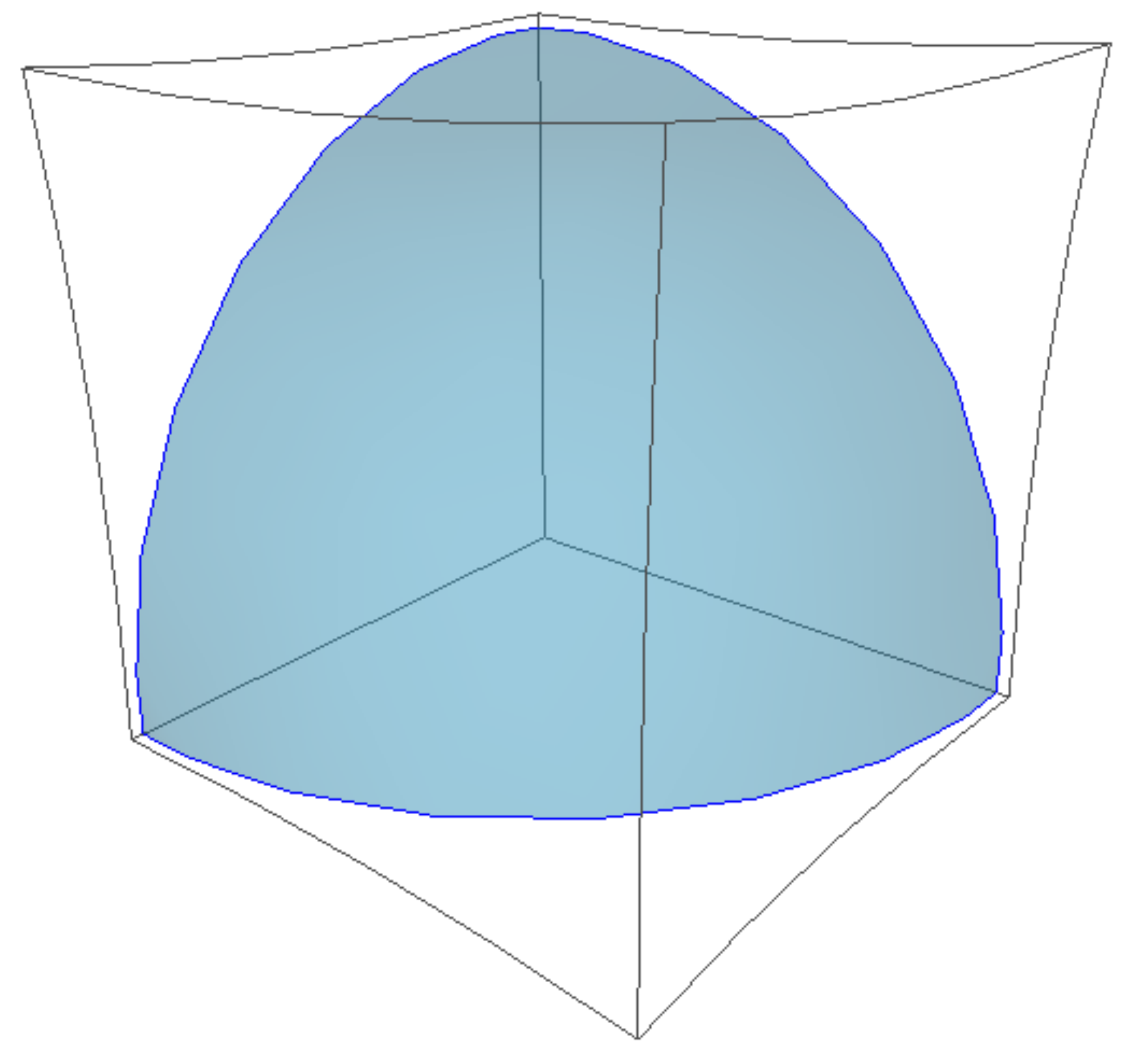}

(a)\\

\smallskip

\includegraphics[scale=0.32]{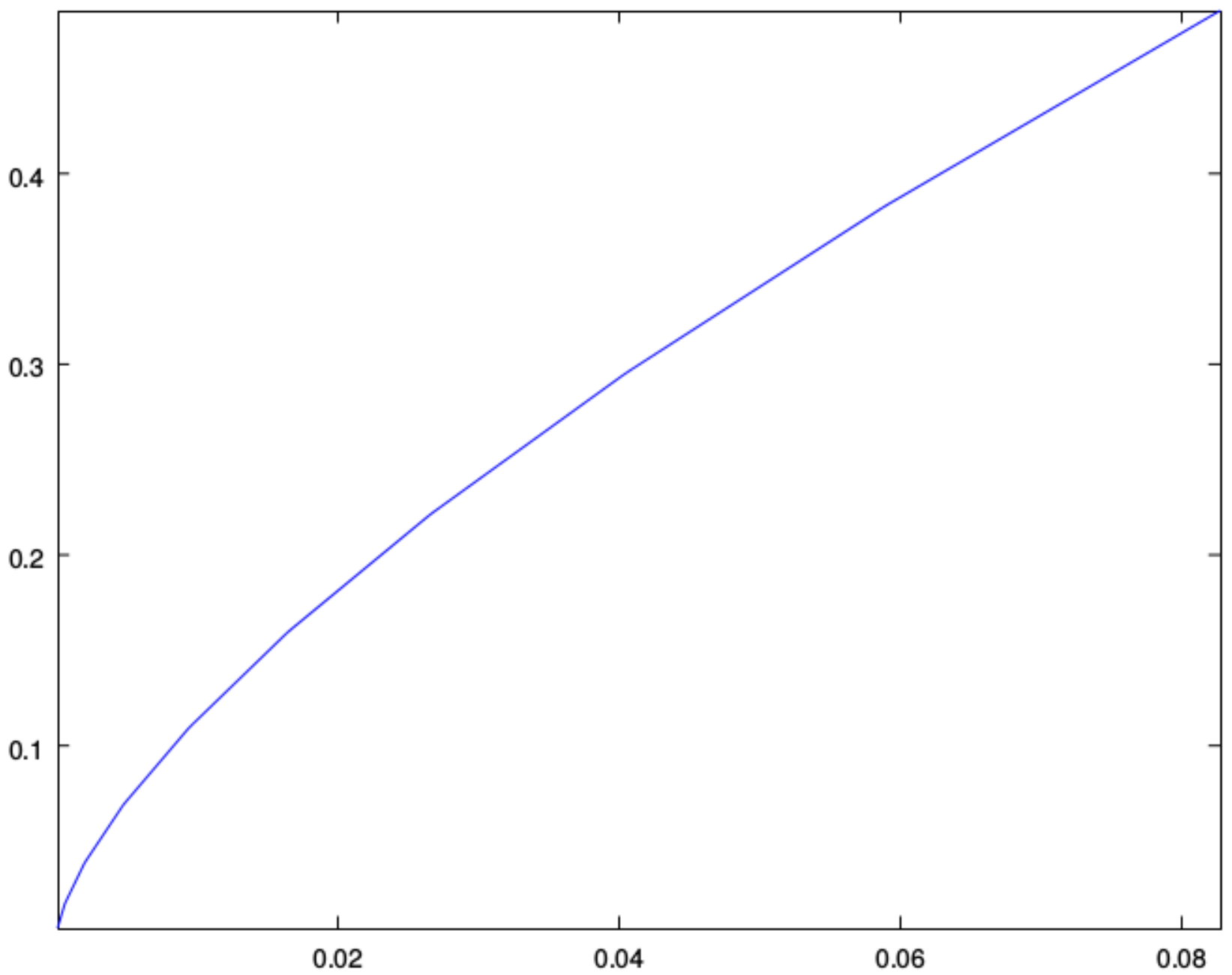}\hfill
\includegraphics[scale=0.32]{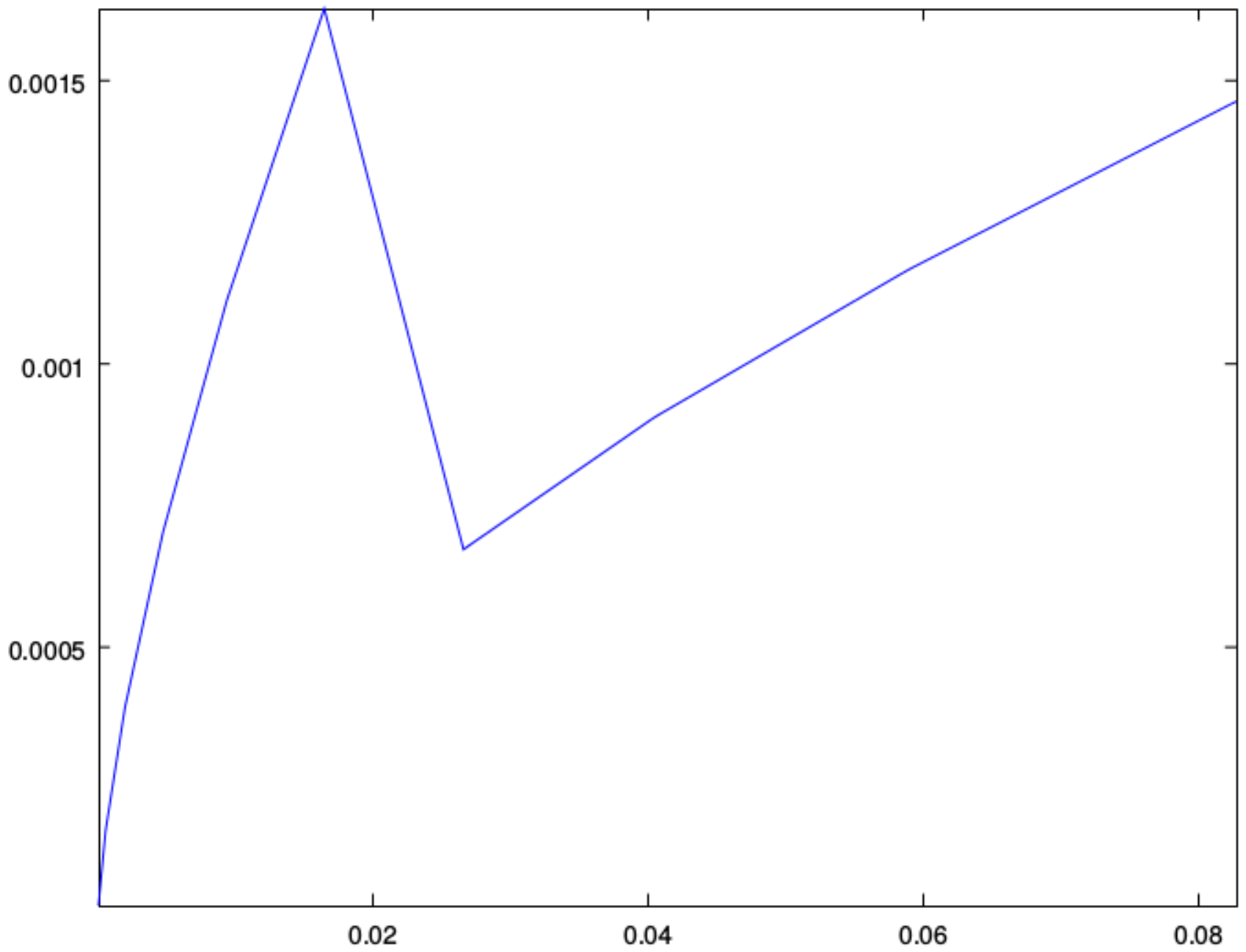}

\centerline{(b)\hspace{6.5cm}(c)}

\caption{Case $\aaa$ for $\eps\cong\Eps$ (a); graph $V\times A$ (b); difference between area values (c).}
\label{aaa}
\end{figure}

Now back to Table~\ref{candid} we consider $\abb$. Numerically speaking this case is rather different from $\aaa$, in which the initial surface was just a tiny equilateral triangle with vertices on the coordinate axes. As usual, {\it Evolver} starts with a very simple polyhedral surface whose triangulation must be consistently refined, equalised and submitted to energy minimisation under geometrical and quantitative constraints. After some iterations the numerical surface may serve as evidence to help answer theoretical questions.

For $\abb$ we could try the initial surface depicted in Fig.~\ref{abb}(a). It consists of a rectangular blue face and a transparent triangular face on top. This one is used by {\tt(q1,q2,q3)} to compute the hyperbolic volume but we ask {\it Evolver} to paint it in {\tt CLEAR} for $A(\Si)$ does not count it.

We end up with a numerical unduloid by starting from Fig.~\ref{abb}(a). But it presents two numerical problems: the convergence is slow and the resulting unduloid is just a local minimum of area under volume constraint. The convergence problem arises from the long thin shape and the initial contact angle of circa $\pi/4$ with the coordinate planes. It does converge to $\pi/2$ as expected from Theorem~\ref{mainthm} but not as quickly as $\aaa$, since now we have that long thin face.

The non-isoperimetric unduloid arises from the metric~(\ref{confm}), which makes the blue face in Fig.~\ref{abb}(a) broad on top and narrow on bottom. That is the feature we get for the resulting surface, which is just a local minimum under volume constraint. From the Euclidean perspective the isoperimetric unduloid must look broader on bottom than on top, which is the case if we start from Fig.~\ref{abb}(b). Notice that the initial surface is also orthogonal to $\deh\B$, which speeds up convergence to a numerical answer compatible with Theorem~\ref{mainthm}.

These strategies may look rather tricky but when dealing with numerical optimisation either {\it Evolver} or any other software will strive for convergence around values that are potentially local minima. Global minima will be hardly found without applying theoretical framework.

That said, we generated unduloids of volume $V(S_\eps)$ according to~(\ref{av}) for $\eps$ varying from $2\Eps/5$ to $6\Eps/5$. This last one is depicted in Fig.~\ref{abb}(c). Finally, Fig.~\ref{abb}(d) shows both $\aaa$ and $\abb$ for $V\times A$. There the $\abb$ and $\aaa$ curves are red and blue, respectively (cf. Fig.~\ref{aaa}(b)). Notice the turning point at $V\cong0.022$.

\begin{figure}[ht!]
\center
\includegraphics[scale=0.27]{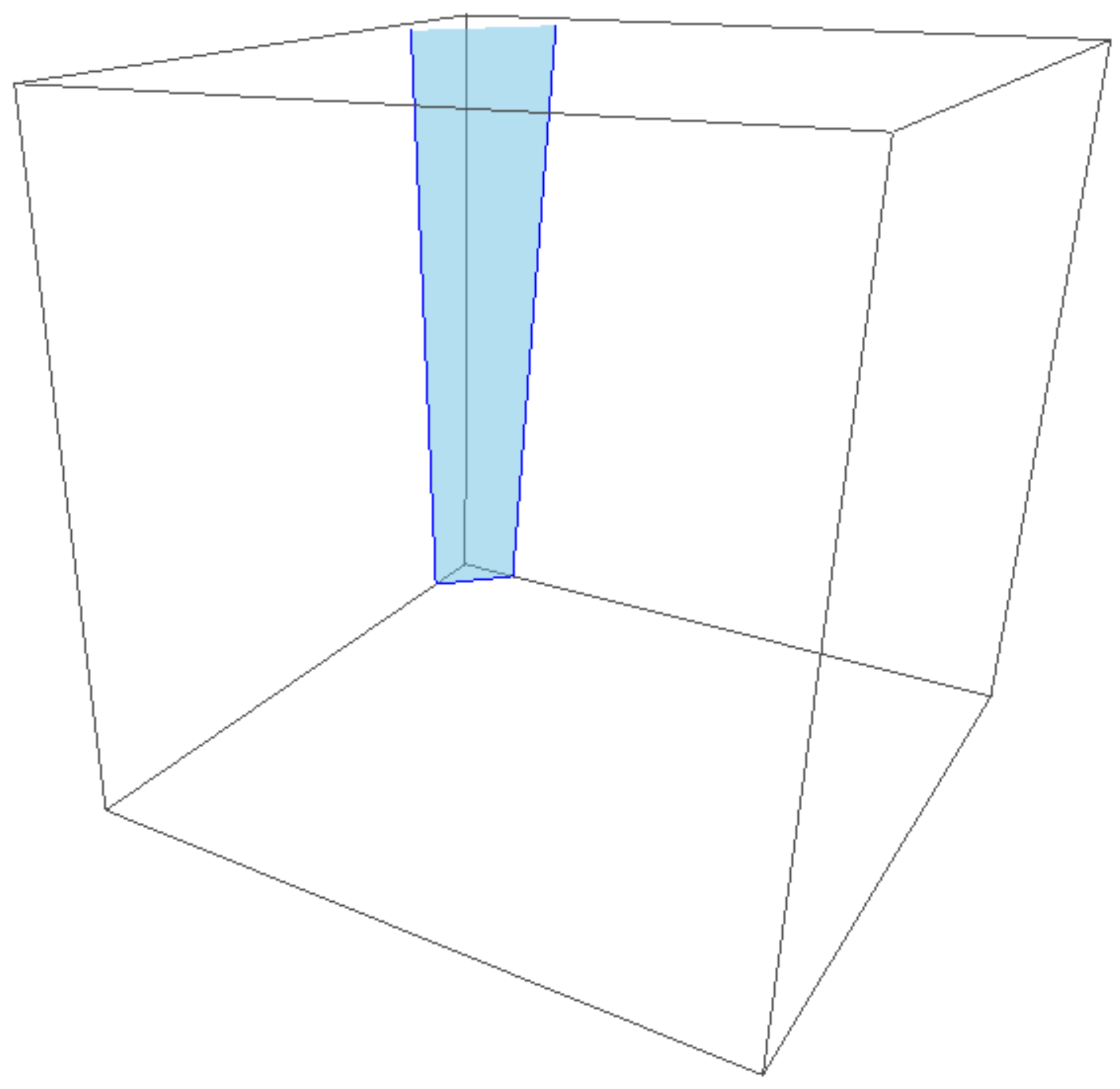}\hfill
\includegraphics[scale=0.30]{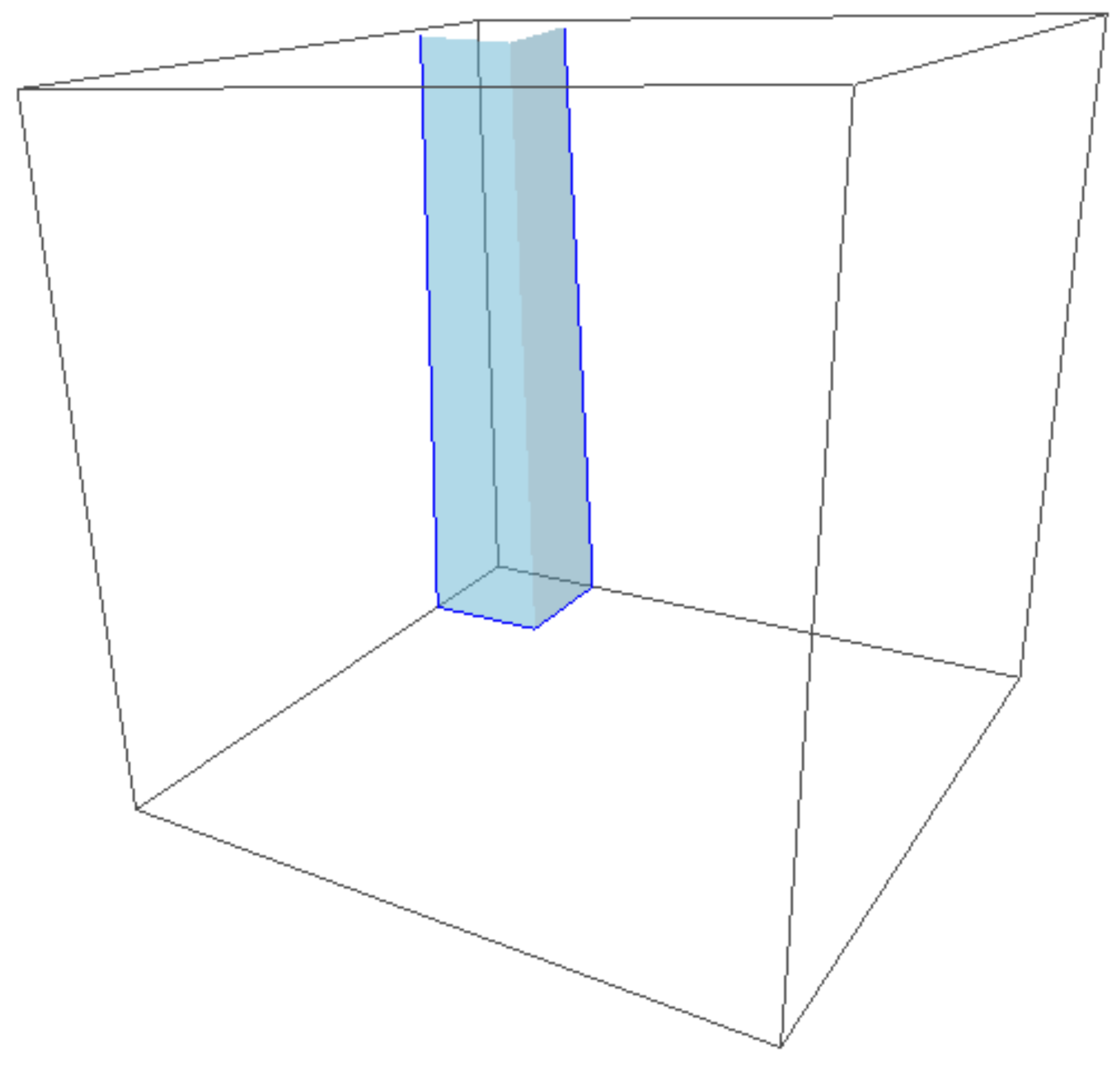}\hfill

\centerline{(a)\hspace{6.5cm}(b)}

\bigskip

\includegraphics[scale=0.27]{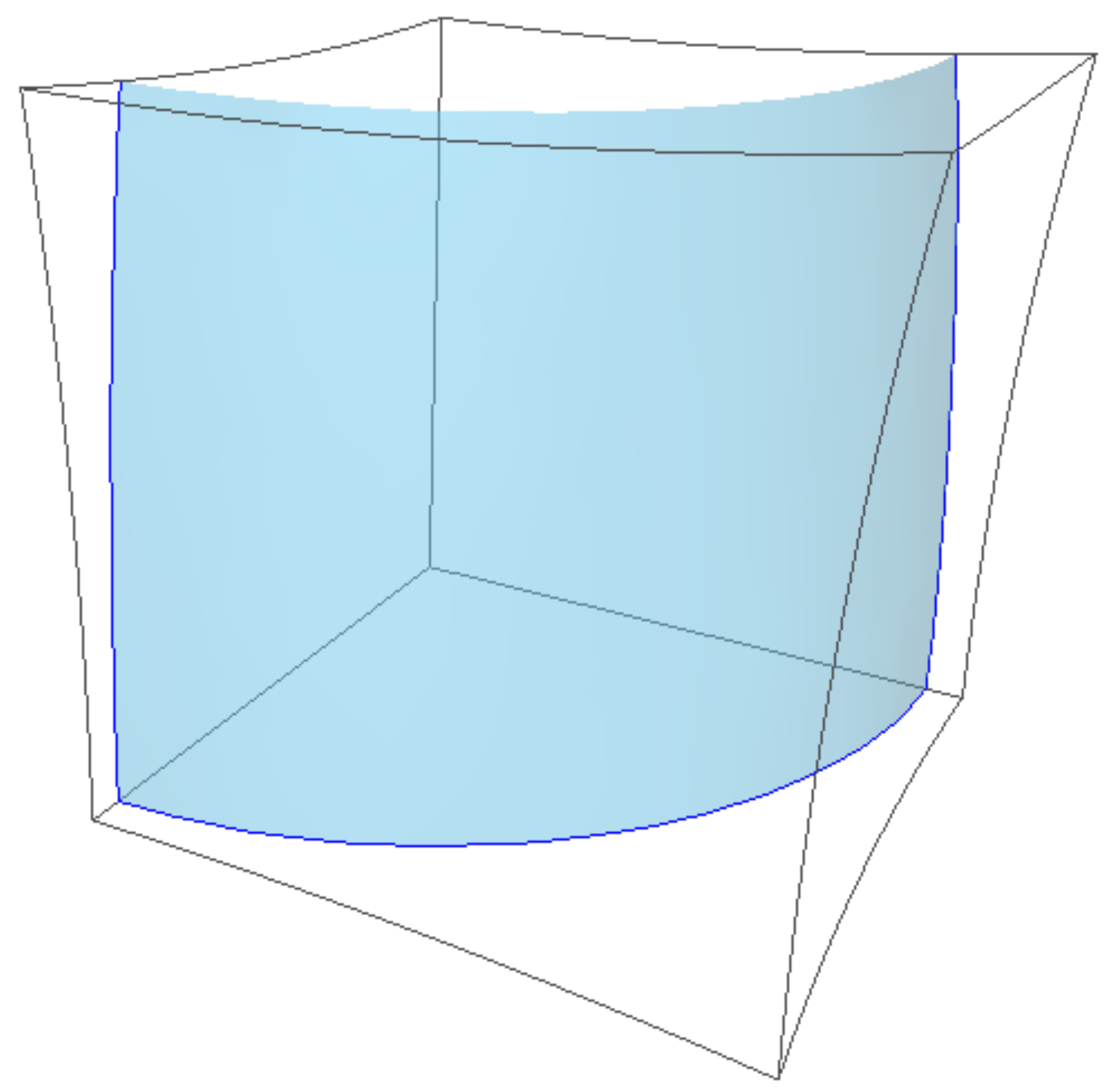}\hfill
\includegraphics[scale=0.32]{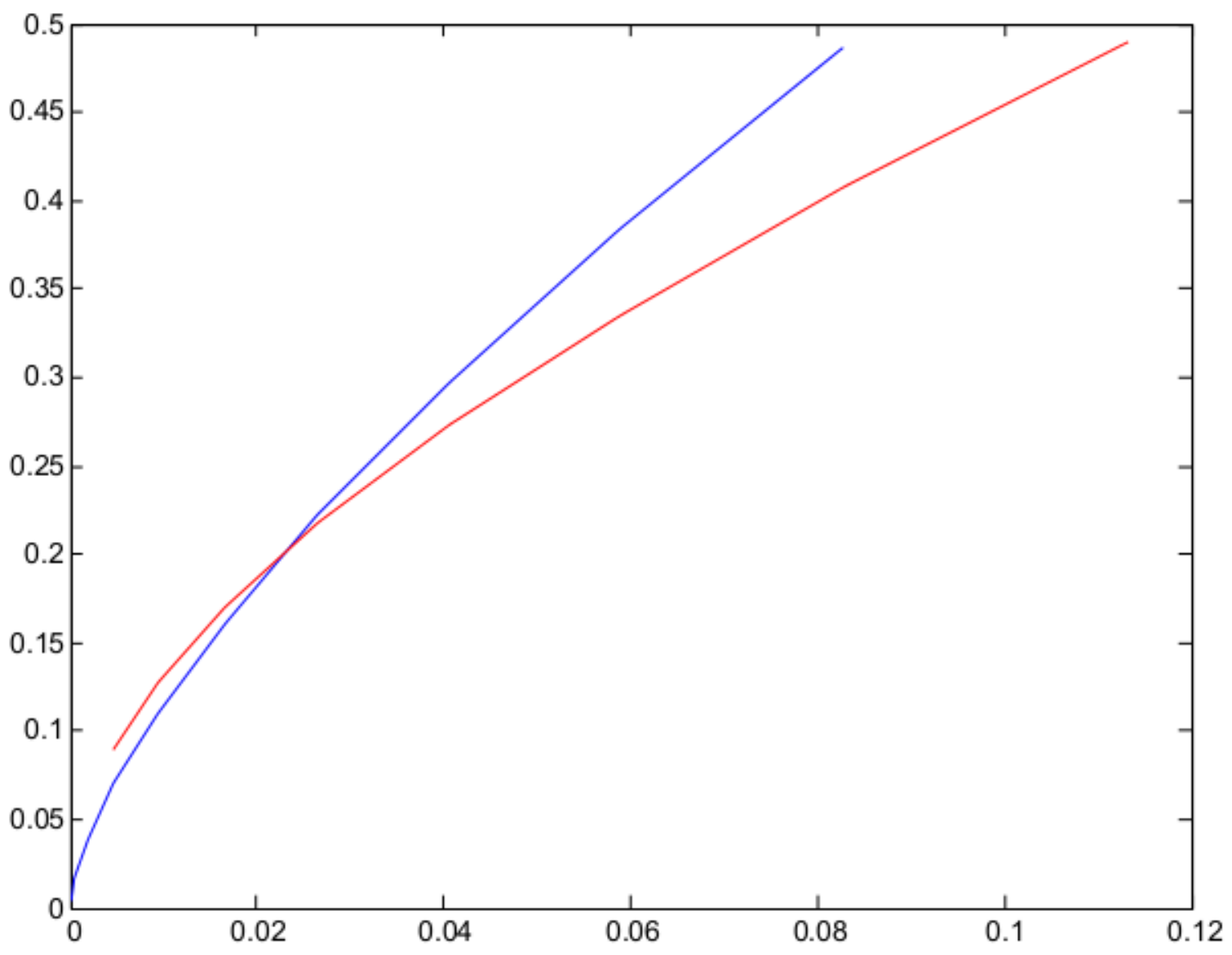}

\centerline{(c)\hspace{6.5cm}(d)}

\caption{Initial $\abb$ (a); with adjustments (b); extreme unduloid (c); $\aaa\times\abb$ (d).}
\label{abb}
\end{figure}

Unduloids like in Fig.~\ref{abb}(c) are surfaces obtained by revolution of a stretch of {\it hypercircle} around $Ox_3$. If we take $Ox_1$, $Ox_3$ as the horizontal and vertical axes in Fig.~\ref{3planes&tiling45}(b), respectively, then hypercircles are all equated as
\BE
   \big(x_1+(1/\eps-\eps)/2\big)^2+x_3^2=(1/\eps+\eps)^2/4,
   \label{horoc}
\EE
in our case restricted to $\eps<\Eps$ and $x_3<c-(r^2-x_1^2)^{1/2}$ in Euclidean values. This motivates the following definition:

\begin{defn}
Consider the curve $(\ref{horoc})$ in $Ox_1x_3$ for positive $x_3<c-(r^2-x_1^2)^{1/2}$ and $\eps<\Eps$. Then $U_\eps$ is the surface of revolution obtained by rotation of this curve around $Ox_3$. We call $U_\eps$ the vertical unduloid with axis $Ox_3$.
\label{ond}
\end{defn}

Back again to Table~\ref{candid} we now take $\bbe$, a much easier case from the numerical point-of-view. For $V=0$ the corresponding $A=\A$ is one quarter of the area of the central square in Fig.~\ref{3planes&tiling45}(b). By means of polar coordinates we get
\BE
   \pi+\A=\int_0^{\pi/4}\frac{2}{1-c^2\cos^2\tet+
   c\cos\tet\cdot\sqrt{c^2\cos^2\tet-1}}=\frac{11}{10}\pi,
   \label{quart}
\EE
where the value in~(\ref{quart}) was computed via {\it Cauchy's Residue Theorem}. Namely, we begin with $A=\pi/10$ and $A$ does not grow very much with $V$. Of course, it is constant in the Euclidean case. In Fig.~\ref{bbe}(a) we see $A\cong0.38$ for the extreme value $V\cong0.15$, which already surpasses $V(\B)/2\cong0.108$. The turning point between $\abb$ and $\bbe$ is at $V\cong0.058$, as shown in Fig.~\ref{bbe}(b). It summarizes the three main cases for our discussions in the next section.

\begin{figure}[ht!]
\center
\includegraphics[scale=0.27]{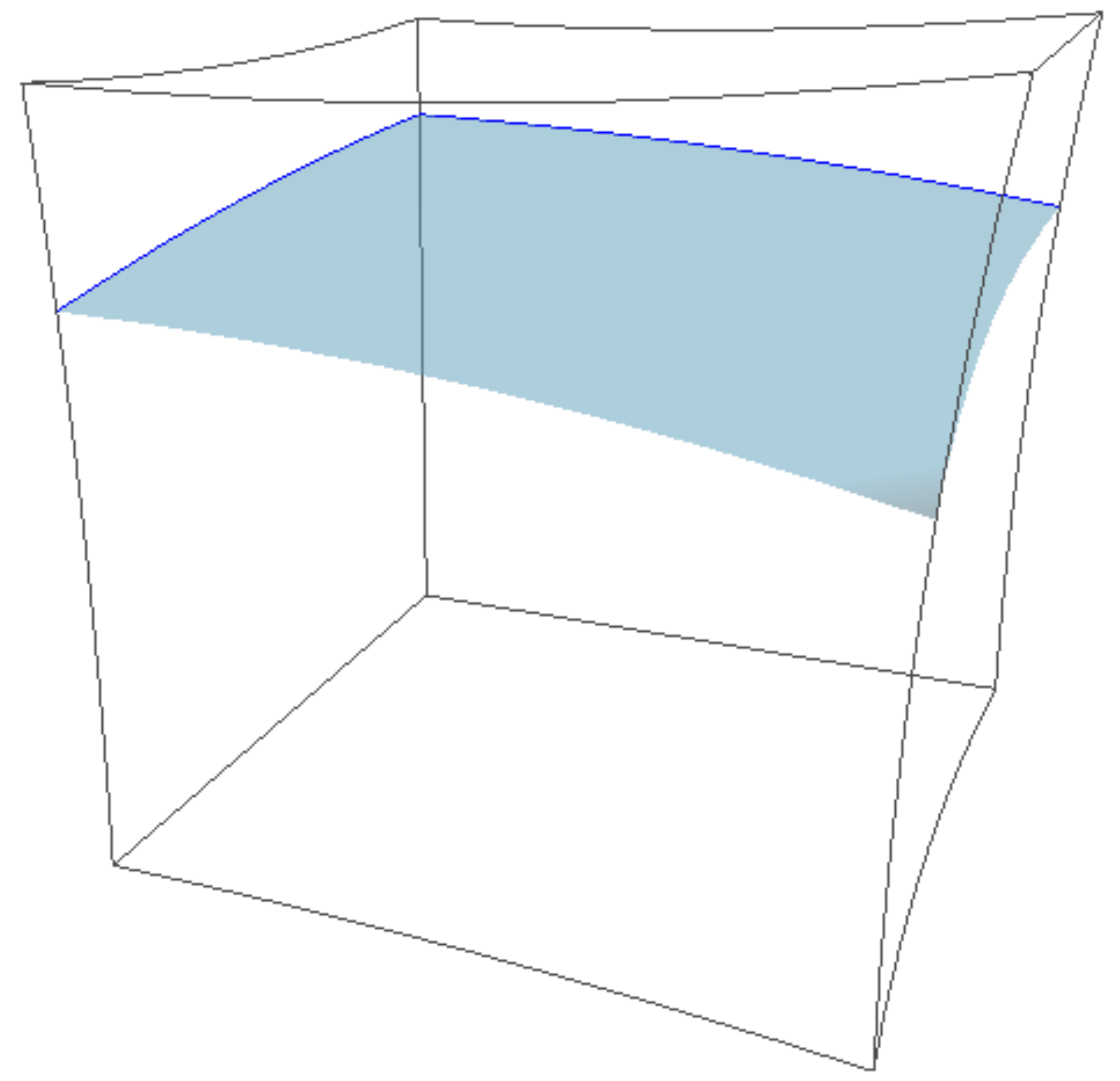}\hfill
\includegraphics[scale=0.32]{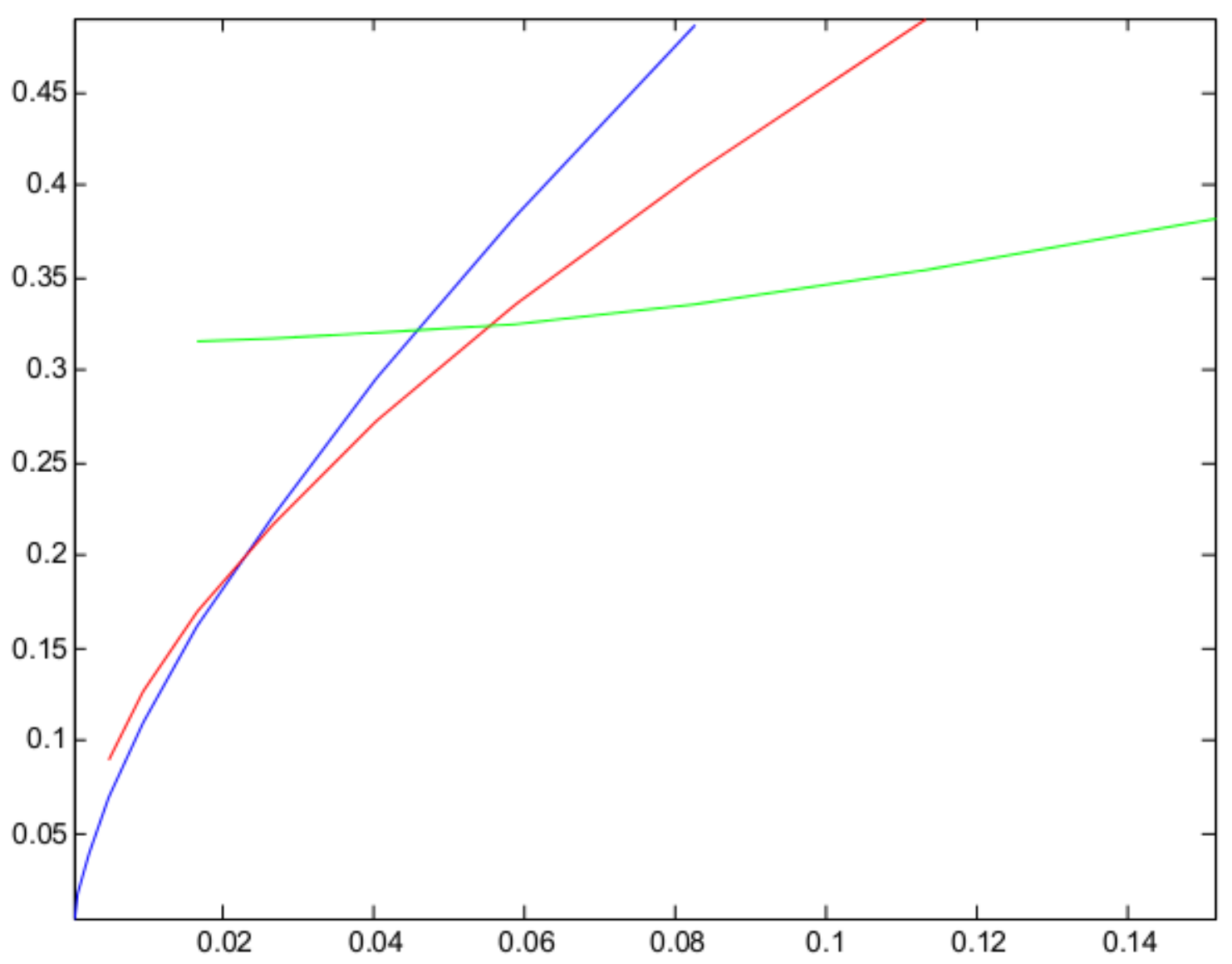}

\centerline{(a)\hspace{6.5cm}(b)}

\caption{Extreme $\bbe$ (a); joint graph of $\aaa$, $\abb$ and $\bbe$.}
\label{bbe}
\end{figure}

Of course, from (\ref{horoc}) we see that the surface in Fig.~\ref{bbe}(a) is a piece of {\it hypersphere} equated as 
\BE
   x_1^2+x_2^2+\big(x_3+(1/\eps-\eps)/2\big)^2=(1/\eps+\eps)^2/4,
   \label{horos}
\EE
where $\eps<\Eps$ and $x_1,\,x_2$ vary inside the positive quadrant of the central square in Fig.~\ref{3planes&tiling45}(b).

\section{Results}
\label{res}

We begin this section by recalling an important question rose in \cite{R1}: Could Lawson surface be a solution of the isoperimetric problem for the cubic lattice in $\R^3$? More precisely, let us take the torus obtained by identifying the opposite faces of the unitary cube in $\R^3$. According to~\cite[p.11]{R1}, with {\it Evolver} one can get a Lawson surface of area 1.017 that encloses the volume $1/\pi$. Technically speaking, if the corresponding theoretical surface has in fact area slightly lesser than 1, then the answer will be {\it yes}. Fig.~\ref{bbd}(a) summarizes the corresponding three cases for a cubic lattice of $\R^3$. Lawson hyperbolic surface is depicted in Fig.~\ref{bbd}(b) with $A\cong0.399$ for $V\cong0.113$, already above $V(\B)/2\cong0.108$. For reasons that will soon be explained the minimum of the three graphs in Fig.~\ref{bbe}(b) will be called {\it the isop-curve}. This curve is compared with the graph $V\times A$ in magenta of our simulation of Lawson's case (see Fig.~\ref{bbd}(c)).

\begin{figure}[ht!]
\center
\includegraphics[scale=0.22]{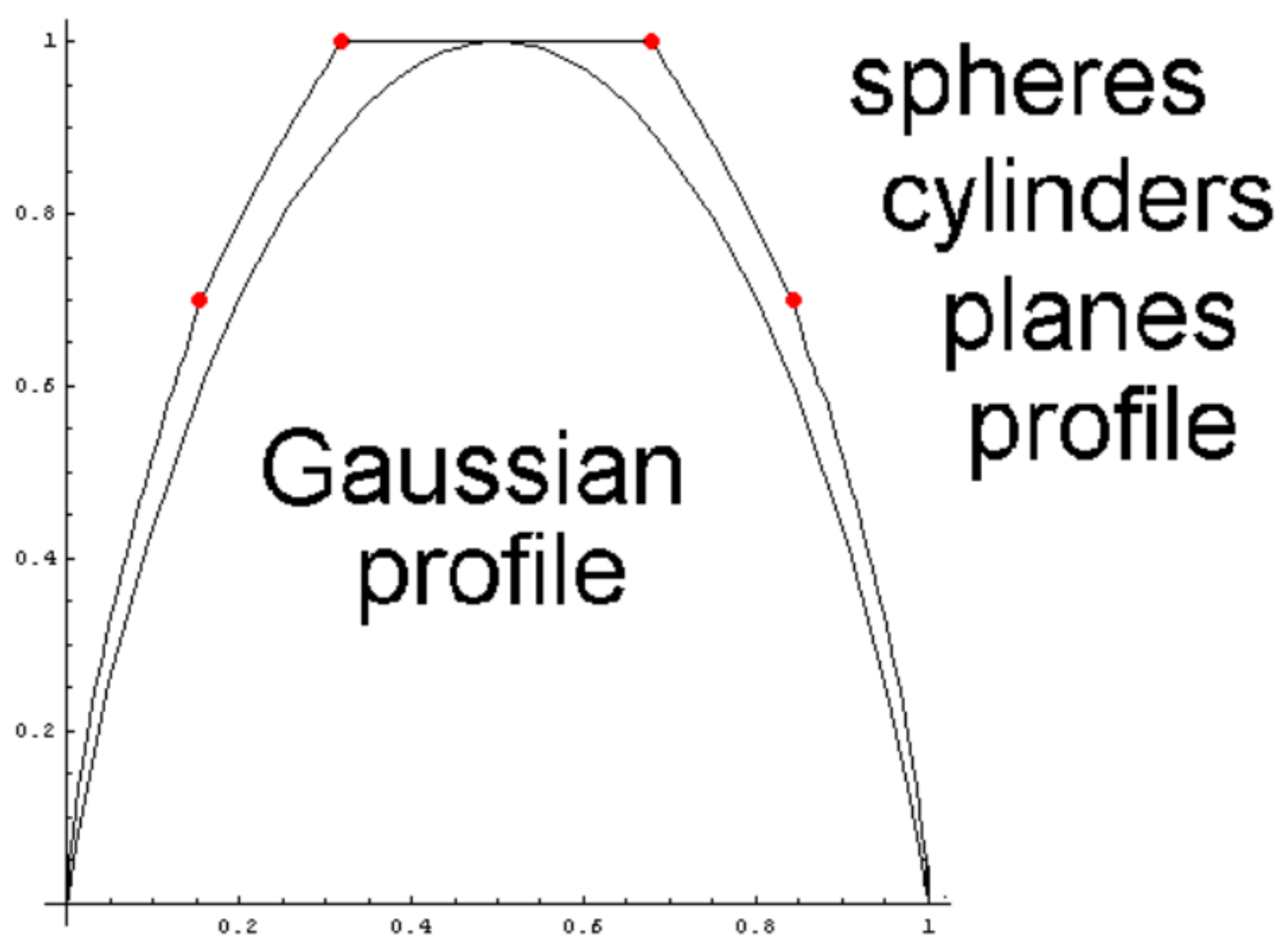}

(a)\\

\smallskip

\includegraphics[scale=0.23]{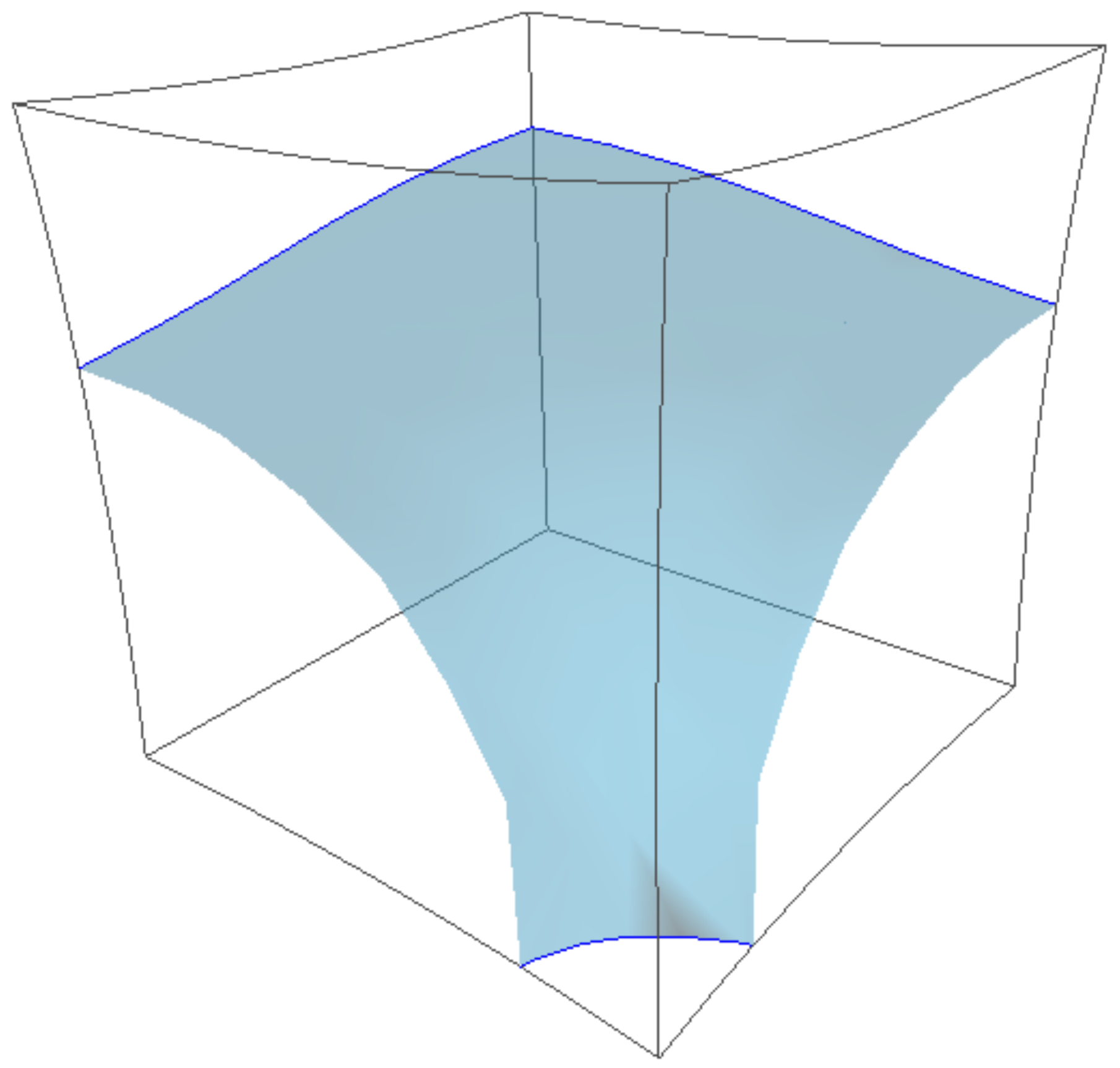}\hfill
\includegraphics[scale=0.30]{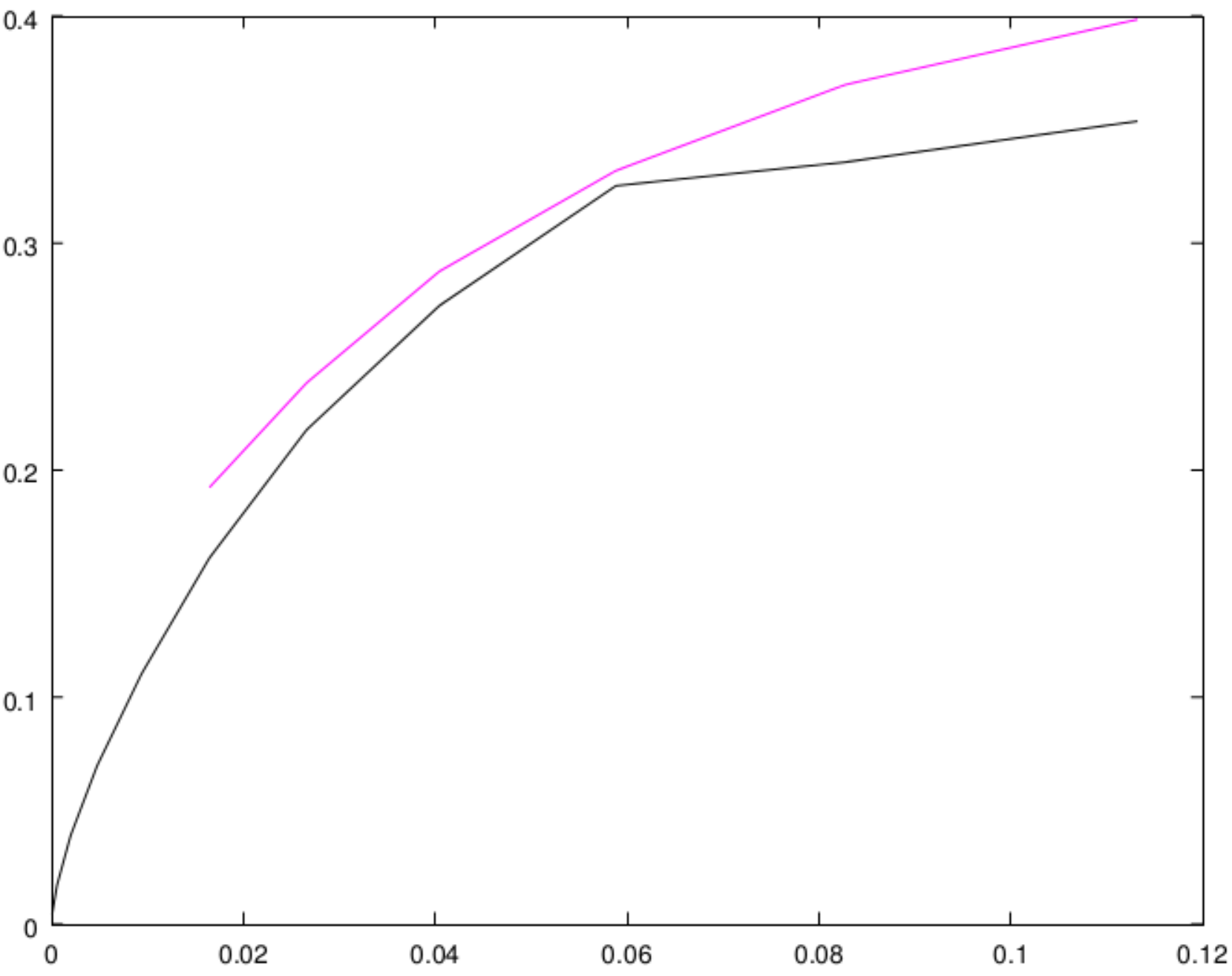}

\centerline{(b)\hspace{6.5cm}(c)}

\caption{The Euclidean case from \cite[Fig.10]{R1} (a); extreme Lawson (b); $\bbd\times$isop-curve in magenta and black, respectively (c).}
\label{bbd}
\end{figure}

In Fig.~\ref{bbd}(a) the turning points occur at circa $0.15$ and $0.3$, respectively. Since $V(\B)\cong0.215$ our first and second turning points occur for the ratios $0.022/V(\B)$ $\cong0.102$ and $0.058/V(\B)\cong0.27$, respectively. Roughly speaking, they happen {\it earlier} than for the Euclidean case. This explains why the $\bbd$-curve is visibly above the isop-curve in Fig.~\ref{bbd}(c).

We used~(\ref{av}) to make the $\bbd$-curve start at $V(S_\eps)\cong0.016$ for $\eps=3\Eps/5$. In Fig.~\ref{abb}(d) the unduloid started at $V(S_\eps)\cong0.005$ for $\eps=2\Eps/5$. In theory an unduloid can have arbitrarily small volume but numerical simulations will only be meaningful if we avoid extreme cases. With Lawson surface there is an additional problem: it simply does not exist when $V$ is too little. In our simulations, for $V$ slightly below $0.016$ ($\eps<3\Eps/5$) the initial surface converges to a degenerate surface consisting of an eighth of $S_\eps$ connected to $(\Eps,0,0)$ and $(0,\Eps,0)$ by tiny tubes.

The analogous problem occurs with $\acc$. Topologically speaking, this case is obtained through a Euclidean reflection of \bbd\, in the plane determined by the points $(\Eps,0,0)$, $(0,\Eps,0)$, $(\cfk,0,\cfk)$ and $(0,\cfk,\cfk)$, where $\cfk$ is defined in~(\ref{corn}). This plane works as a mirror that inverts the way we see Lawson surface, hence the name {\it inverted Lawson}.

In the Euclidean cube of edgelength $\cfk$ the inverted Lawson is again Lawson surface but translated by $(\cfk,\cfk,0)$. However, in the hyperbolic cube we get a different graph $V\times A$ for the inverted surface. This is due to the metric~(\ref{confm}), which requires more area $A(\Om)$ to comprise the same volume $V(\Om)$ when $\Om$ leaves the origin (see Example~\ref{exsph} in Sect.~\ref{prelim}). Fig.~\ref{acc}(a) shows an inverted Lawson close to the extreme case of collapsing its handle around $O$ with the origin itself. Fig.~\ref{acc}(b) shows the $V\times A$ graph for $\acc$ in blue, together with the other graphs in Fig.~\ref{bbd}(c). 

\begin{figure}[ht!]
\center
\includegraphics[scale=0.23]{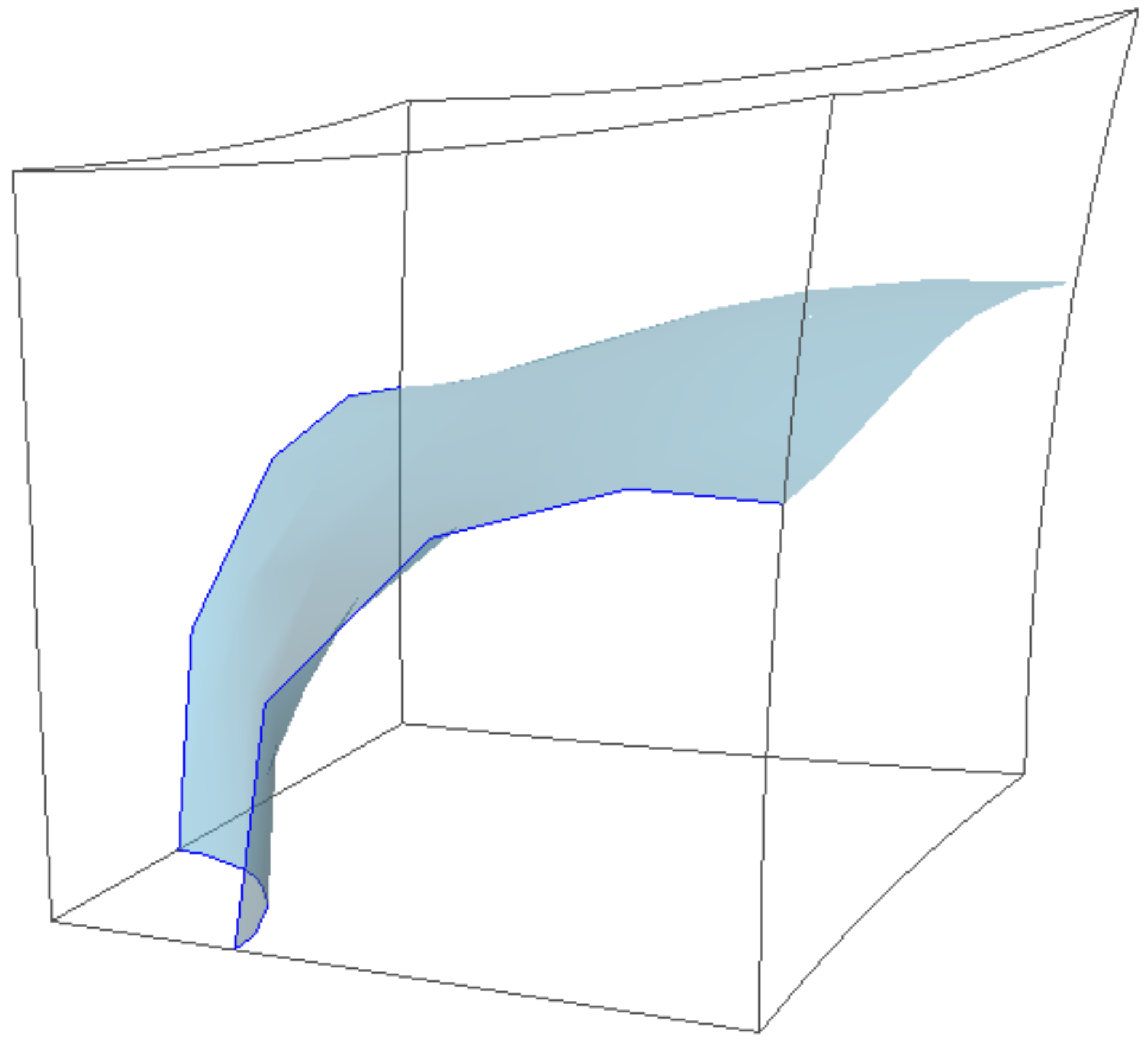}\hfill
\includegraphics[scale=0.30]{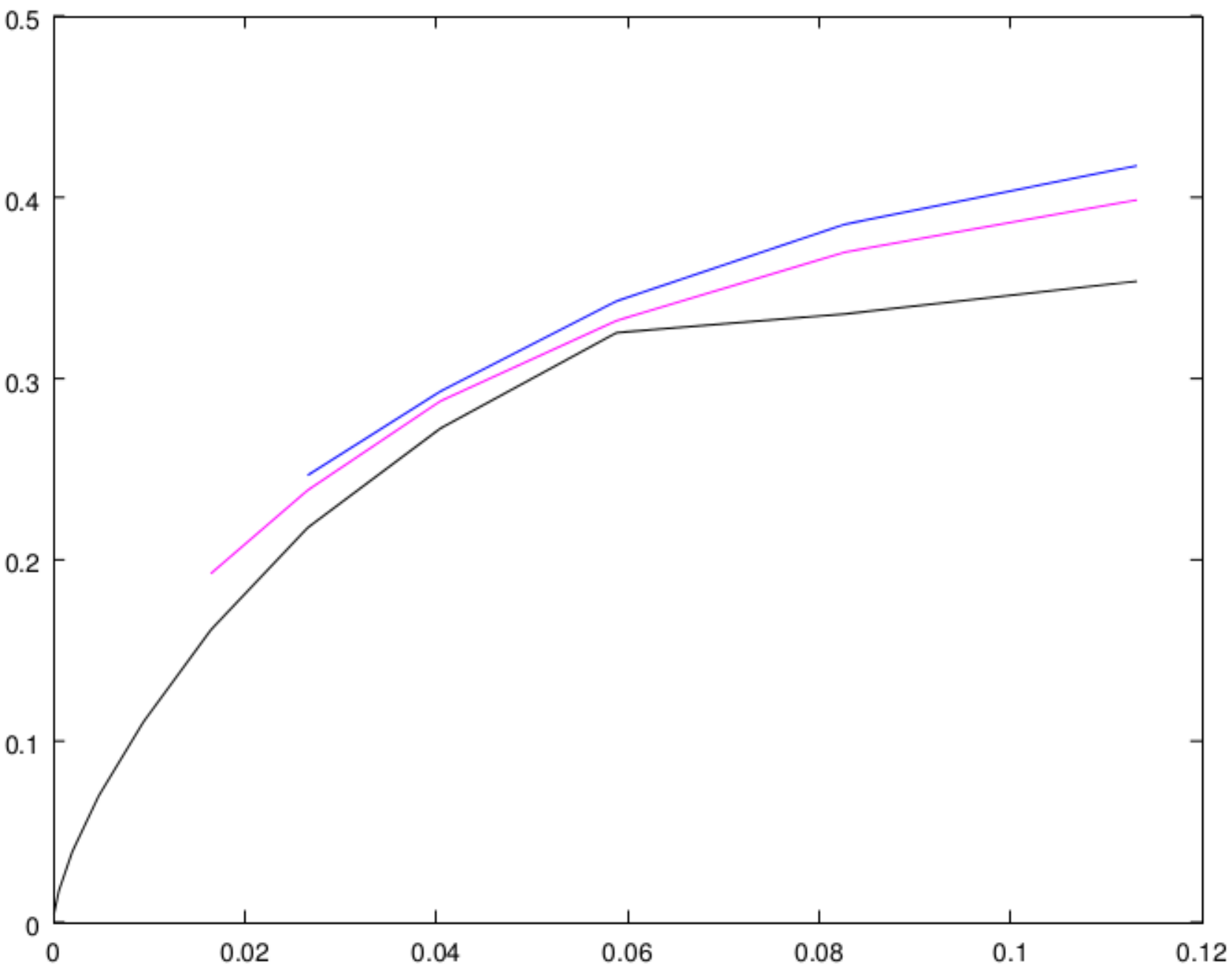}\hfill

\centerline{(a)\hspace{6.5cm}(b)}

\caption{Extreme inverted Lawson (a); $\acc$ in blue, $\bbd$ and the isop-curve (b).}
\label{acc}
\end{figure}

Since we have just recalled Example~\ref{exsph}, notice that its arguments also work for unduloids:

\begin{expl}
Take $U_\eps,\,U$ instead of $S_\eps,\,S$, respectively, where $U_\eps$ is given by Definition~\ref{ond} and $U$ is the ``inverted $U$'', as explained right above. Namely, $V(U)=5V(U_\eps)/4$ and therefore $A(U\cap\B)>A(U_\epsilon\cap\B)$. In the case of a horizontal unduloid, namely with axis either $Ox_1$ or $Ox_2$, area and volume remain unchanged for the ``inverted'' surface. 
\label{exund}
\end{expl}

In Example~\ref{exund} the assertion about horizontal unduloids comes from the fact that their ``inversion'' can be equated. For the axis $Ox_2$ it is the composition of two isometries in $\H^3$: $90^\circ$-rotation $(x_1,x_2,x_3)\mapsto(x_2,-x_1,x_3)$ followed by the spherical inversion
\[
   q\mapsto2r(c+r)\cdot\frac{q-(c+r,0,0)}{||q-(c+r,0,0)||^2}+(c+r,0,0).
\]
For the axis $Ox_1$ it is $(x_1,x_2,x_3)\mapsto(-x_2,x_1,x_3)$ followed by
\BE
   q\mapsto2r(c+r)\cdot\frac{q-(0,c+r,0)}{||q-(0,c+r,0)||^2}+(0,c+r,0).
   \label{invy}
\EE

Because of Examples~\ref{exsph} and~\ref{exund} we had already expected the $\acc$-graph to lie above the $\bbd$-graph as depicted in Fig.~\ref{acc}(b). For the sake of concision we could then have skipped the numerical analysis of $\acc$. Nevertheless, it is important to include it here for two reasons: it justifies why Table~\ref{candid} omits even some feasible cases (like $\ccd$ and $\dde$), and it helps check the reliability of our numerical simulations as we are going to do right now.

Indeed, $\acc$ above $\bbd$ was already expected but we can look even closer. The factor $5/4$ is due to the five-fold symmetry on the corners instead of the four-fold symmetry at the centre of any square in Fig.~\ref{3planes&tiling45}(b). In the Euclidean case any 3D-manifold whose dimensions increase by the same factor $\ld$ will get $A,\,V$ increased by $\ld^2$ and $\ld^3$, respectively. For a cylinder of constant height we have the same factor $\ld$ for both $A,\,V$ if top and bottom do not count. Roughly speaking Lawson surface is close to a pair of horizontal unduloids connected by a very small piece of sphere. By means of Taylor expansion we rewrite (\ref{av}) as
\BE
   A(S_\eps)=16\pi\eps^2+\O(\eps^4)
   \hspace{1cm}{\rm and}\hspace{1cm}
   V(S_\eps)=\frac{32}{3}\pi\eps^3+\O(\eps^5).
   \label{tay}
\EE
Namely $A$ grows by $(5/4)^{2/3}\cong1.16$ when $V$ grows by $5/4$, but its contribution to Lawson surface must be lesser than that. As we have already explained in Example~\ref{exund}, the pair $(V,A)$ remains unchanged for horizontal unduloids. Therefore $A$ must grow very little for the inverted Lawson, as we see in Fig.~\ref{acc}(b). The growth ratio varies from $1.03$ to $1.05$ according to our numerical tests.

A little reflection shows that $\bce$ is the ``inversion'' of {\tt bdb}, this one congruent to Lawson surface, and now we have arguments to skip $\bce$. Table~\ref{candid} also omits $\ccd$ because it is the ``inverted'' $\bcd$. One obtains $\dde$ by taking the ``inversion'' of $\abc$ with respect to the rectangle $(\eps,0,0)$, $(0,0,\eps)$, $(\cfk,\cfk,0)$, $(0,\cfk,\cfk)$, hence this one was omitted from Table~\ref{candid} as well.

Now we are going to study $\ddd$, namely Schwarz surface in Table~\ref{candid}. For reasons that we have already explained for the case of Lawson surface, it is already expected that Schwarz surface does not exist for too little volume. Indeed, our numerical experiments show that the initial surface degenerates to a piece of sphere connected to $(\Eps,0,0)$, $(0,\Eps,0)$ and $(0,0,\Eps)$ by tiny tubes when $V$ is too little. We begin with $V_\eps$ for $\eps=4\Eps/5$ until $\eps=11\Eps/10$ when the volume surpasses $V(\B)/2\cong0.108$. See Fig.~\ref{ddd}(a) for an illustration and also Fig.~\ref{ddd}(b) for its graph $V\times A$ in red.

\begin{figure}[ht!]
\center
\includegraphics[scale=0.21]{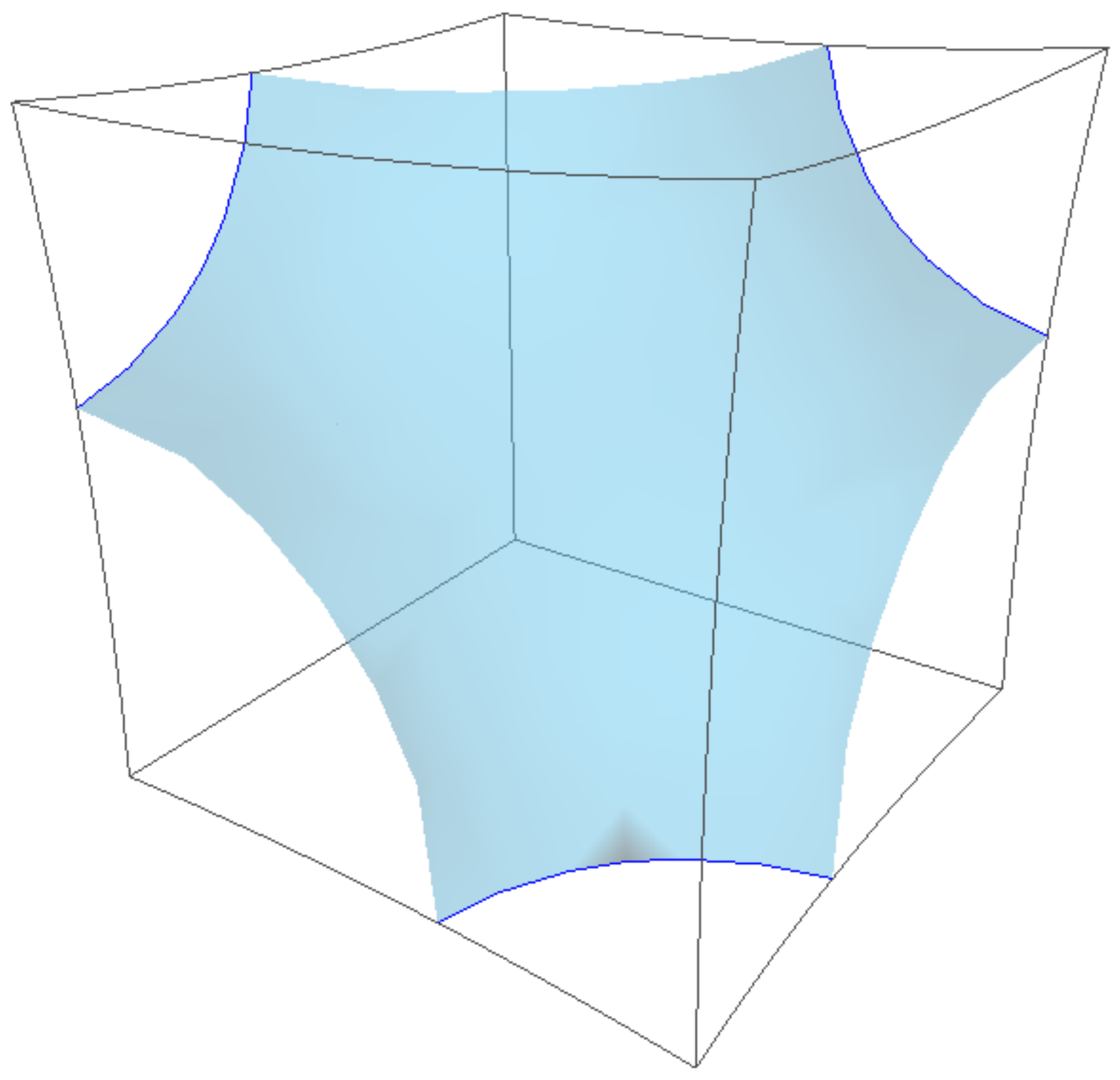}\hfill
\includegraphics[scale=0.30]{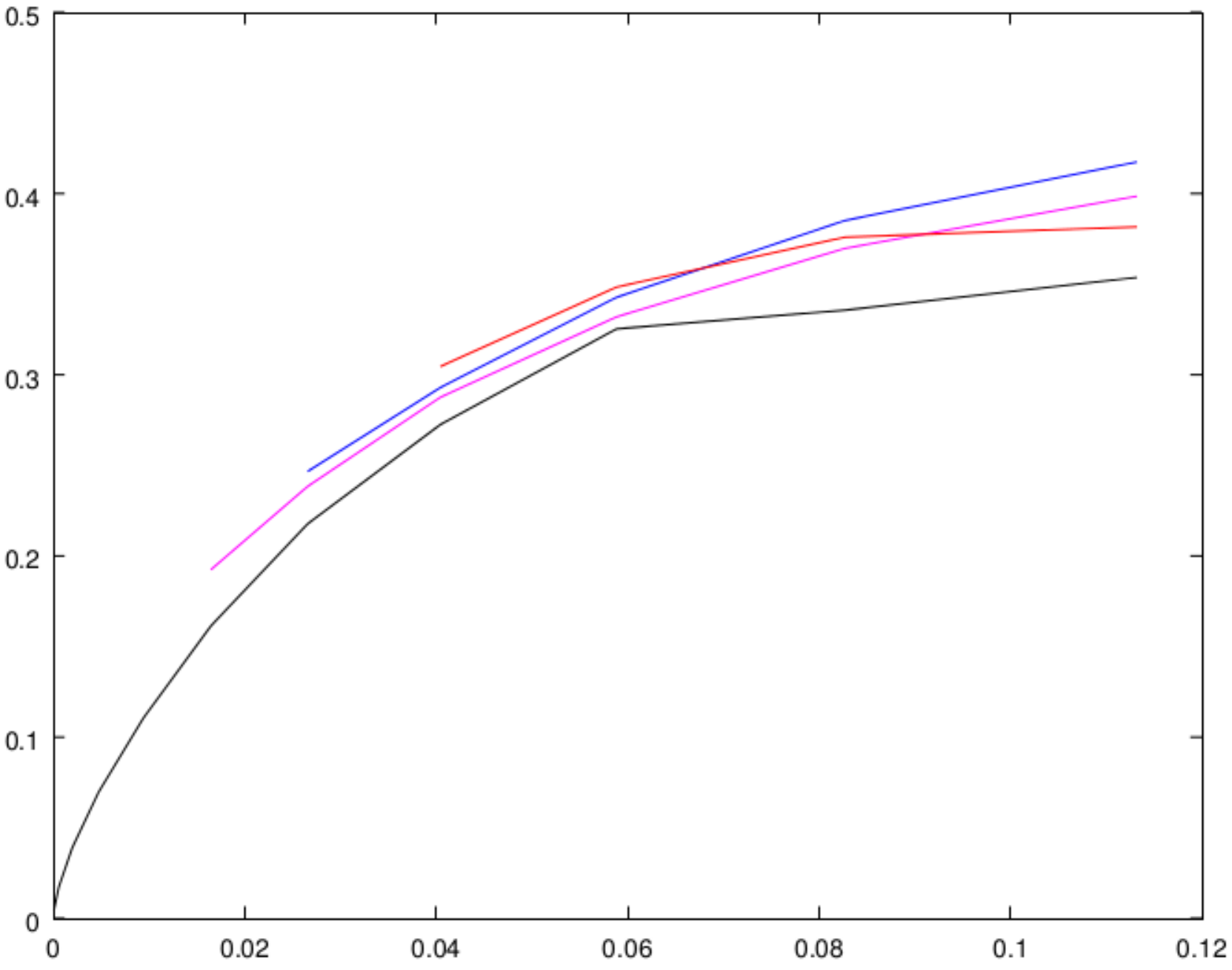}\hfill

\centerline{(a)\hspace{6.5cm}(b)}

\caption{Extreme Schwarz (a); $\ddd$ in red and the previous curves (b).}
\label{ddd}
\end{figure}

For the readers who want to have an overview of Schwarz's surface with zero CMC in~$\H^3$ here is a nice picture by Konrad Polthier (Freie Universit\"at Berlin): \url{http://www.polthier.info/articles/teaser/H3wrfeb_tiny.jpg}

Now what is the {\it inverted Schwarz}? By looking at Figs.~\ref{cases}(a) and (c) we realise that it should be again $\acc$, which was called {\it inverted Lawson}. However, by arguments already given in this section we know that ``inversion'' leads to a non-isoperimetric $\Si$ and for this reason we do not care about the apparent dubious meaning of $\acc$. The {\it inverted Schwarz} can then be skipped, even if Fig.~\ref{cases} leaves only $\acc$ as a way to codify it. For the same reason we shall not differ sub-cases when dubious triads $\ell_1\ell_2\ell_3$ appear again.

Back to Table~\ref{candid} we now study the remaining cases $\abc$, $\bcd$ and $\bce$. Topologically speaking Fig.~\ref{abc}(a) shows that these surfaces are Lawson's example reflected by (\ref{invy}). But differently from the previous inverted cases now we must analyse these ones numerically. Fig.~\ref{abc}(b) compares $\abc$ with Fig~\ref{bbd}(c).

\begin{figure}[ht!]
\center
\includegraphics[scale=0.25]{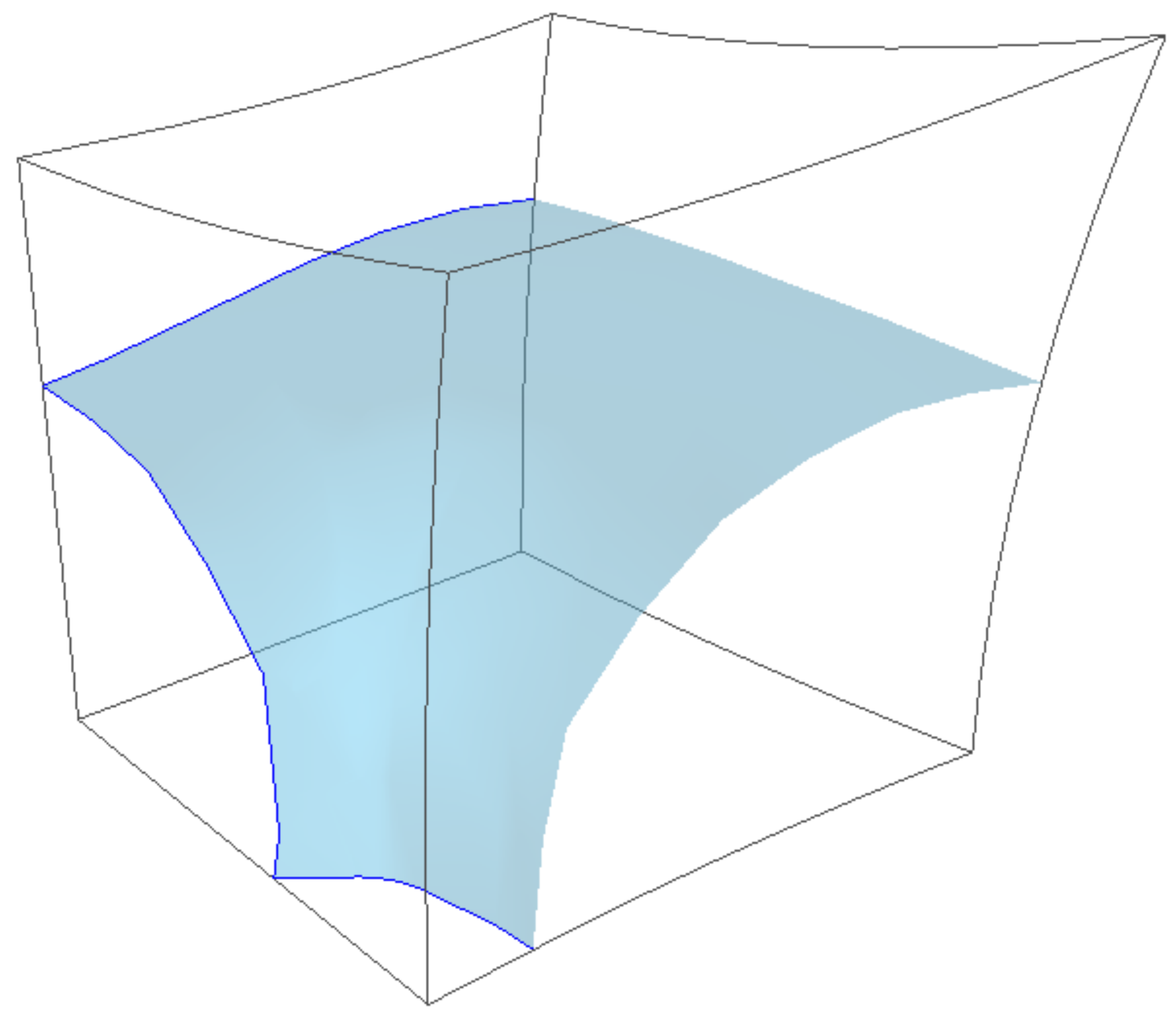}\hfill
\includegraphics[scale=0.40]{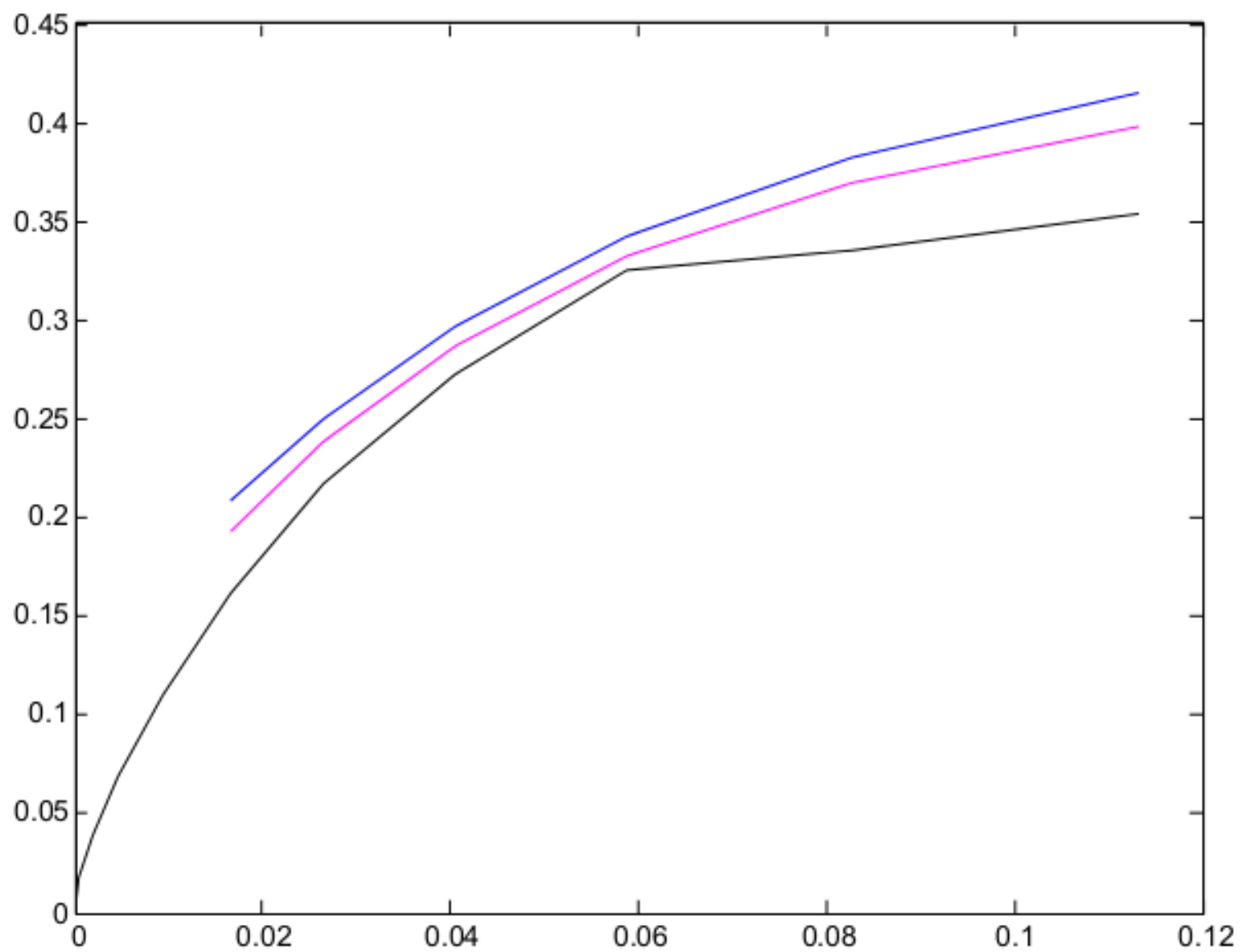}\hfill

\centerline{(a)\hspace{6.5cm}(b)}

\caption{Extreme $\abc$ (a); $\abc$ in blue, $\bbd$ and the isop-curve (b).}
\label{abc}
\end{figure}

By looking at Figs.~\ref{abc}(b) and~\ref{acc}(b) we realise that $\abc$ is quite similar to the inverted Lawson. But $\abc$ lies above $\bbd$ almost as a parallel curve while $\acc$ rises more quickly and is closer to $\bbd$ for small values of $V$. Anyway, we still can argue that $\abc$ needs more area for the same volume of $\bbd$ because the contact of $\abc$ with the origin is not as ``large'' as in the case of $\bbd$. Together with the metric~(\ref{confm}) this explains why $\abc$ performs worse than $\bbd$. 

Finally we are going to study $\bcd$, which is in fact a degenerate case. From Figs.~\ref{bcd}(a) to (c) one promptly recognises that its topology was discarded as a possible CMC surface in a Euclidean three-dimensional box, as proved in~\cite{Ri}. However, since we cannot adapt all of the arguments in~\cite{Ri} to the hyperbolic geometry, and similar CMC surfaces were already found in Euclidean boxes (see the gyroids of \cite[Fig.2]{KGB}), then we must include the numerical analysis of this case. The volume $V\cong0.058$ is fixed and the area $A$ decreases in Figs.~\ref{bcd}(a) to (c). Namely, no CMC surface can be found in $\B$ with that topology.

\begin{figure}[ht!]
\center
\includegraphics[scale=0.18]{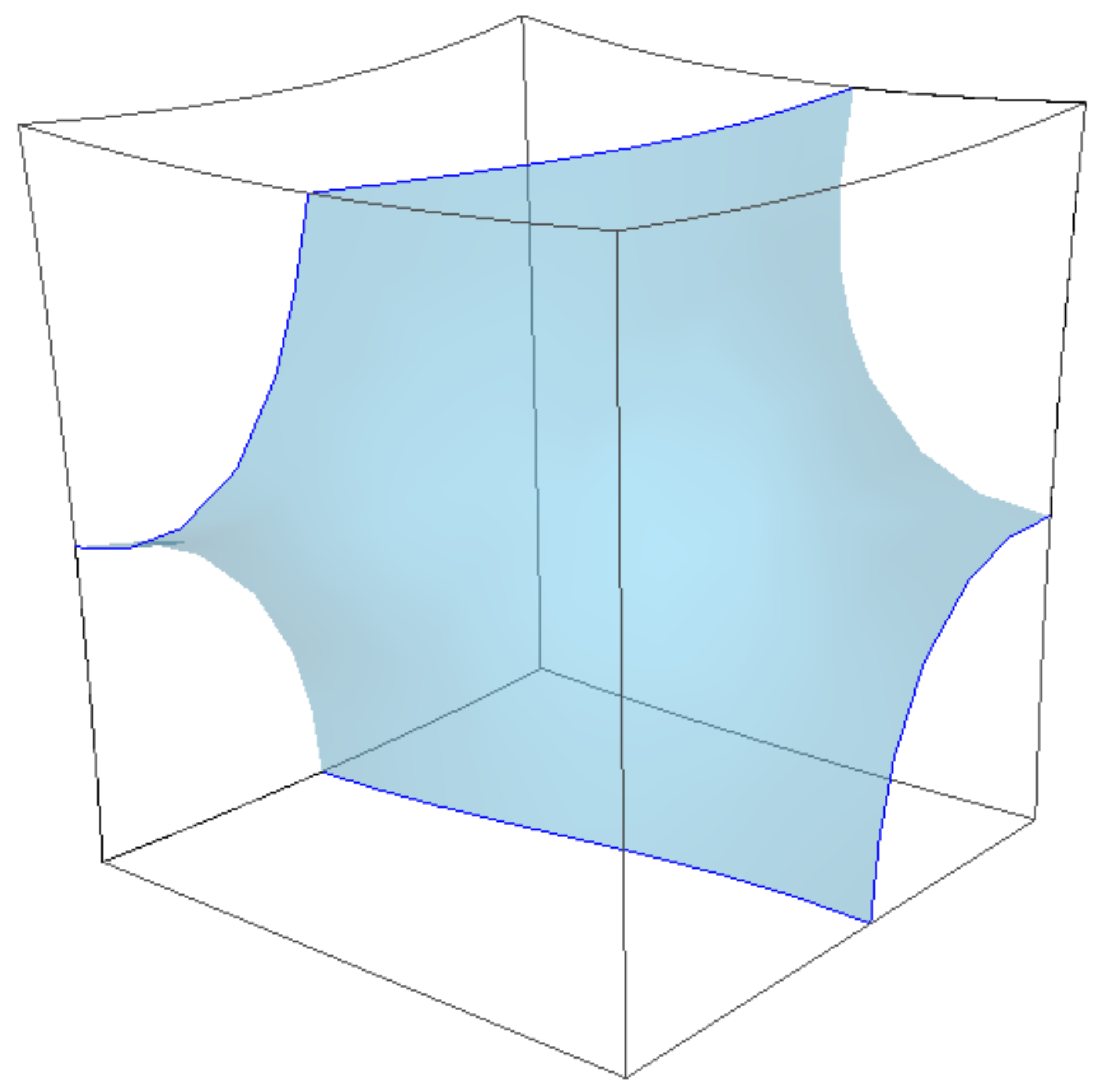}\hfill
\includegraphics[scale=0.18]{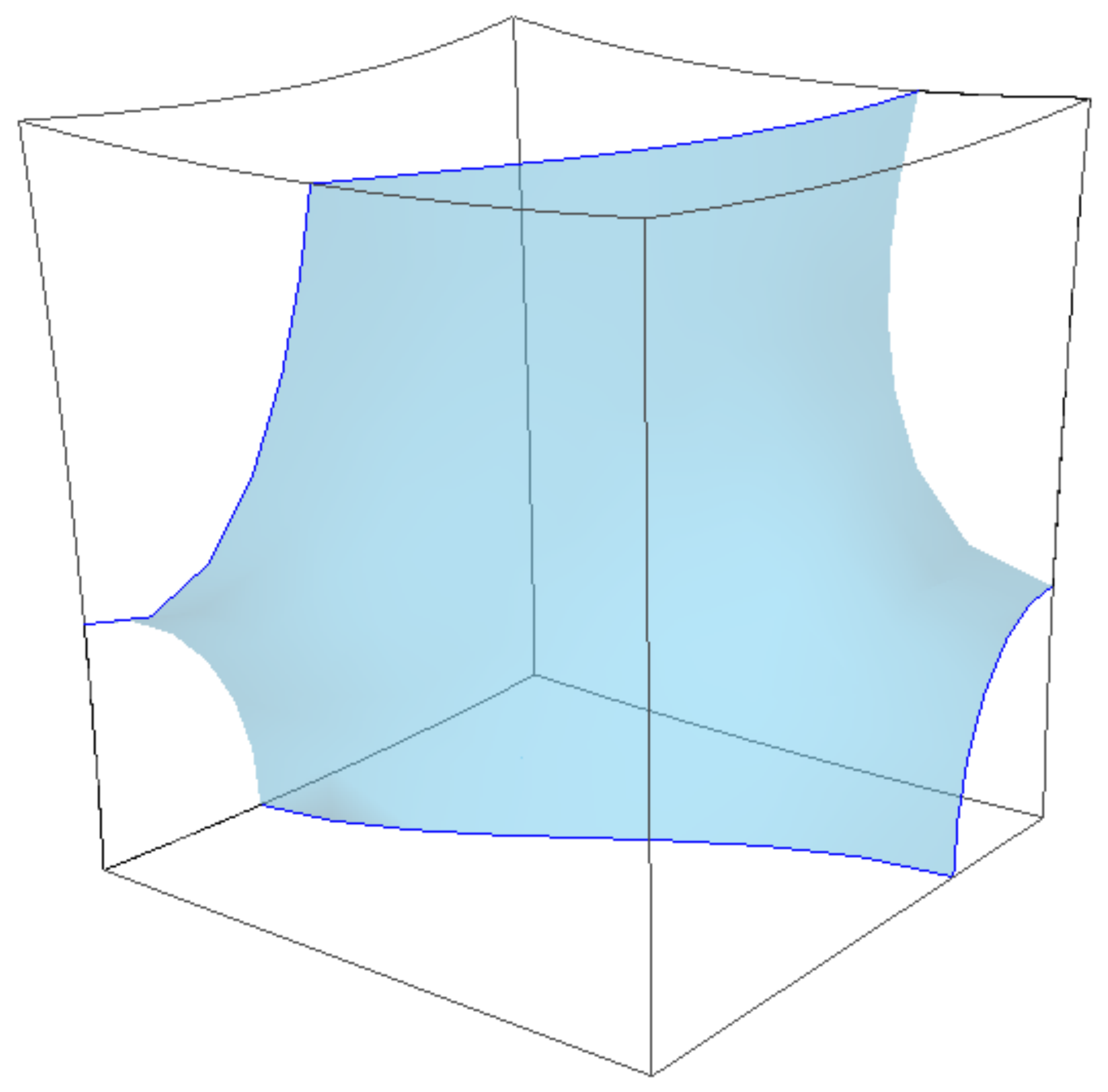}\hfill
\includegraphics[scale=0.18]{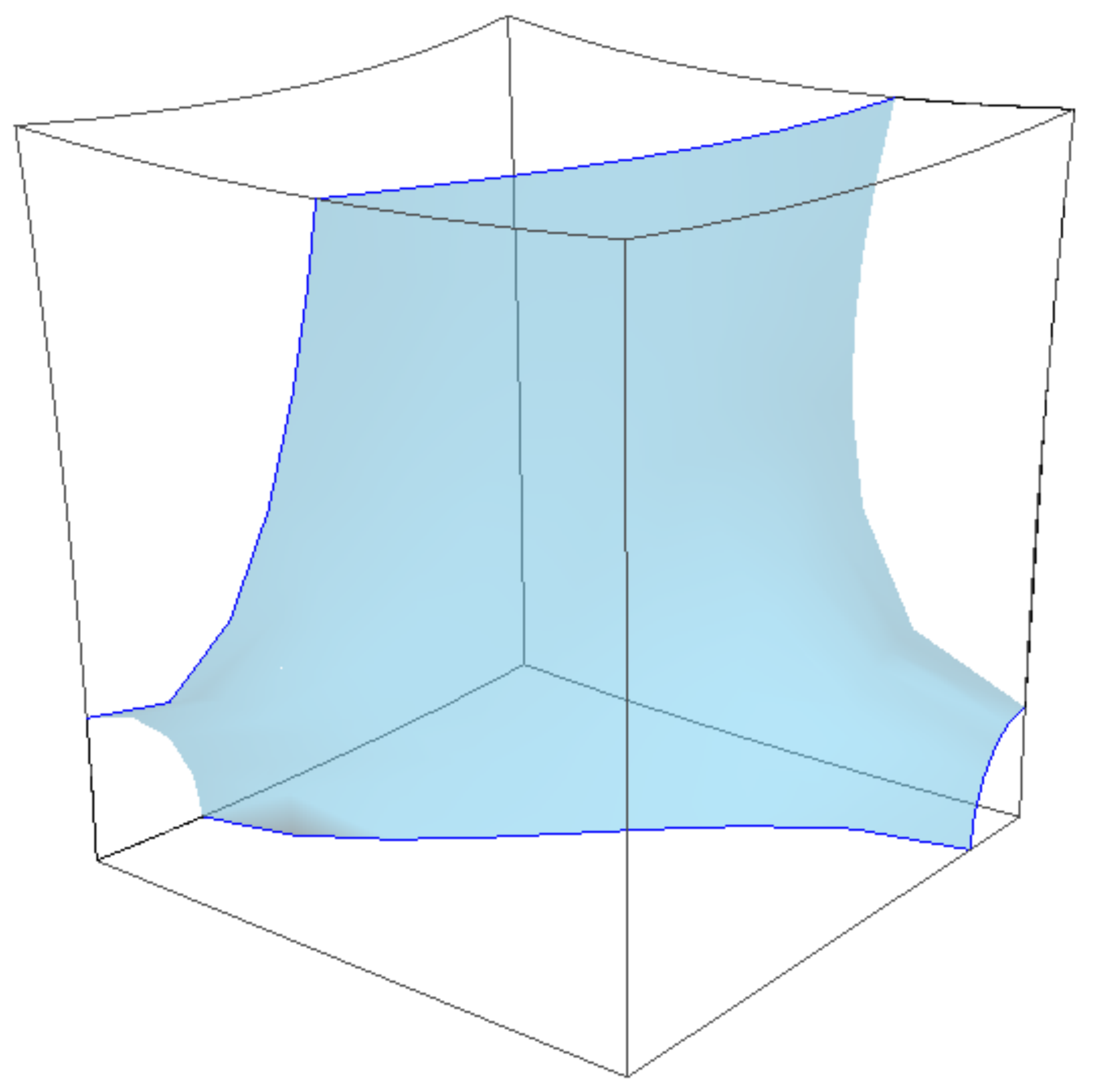}

\centerline{(a)\hfil(b)\hfil(c)}

\caption{The degenerate case $\bcd$: initial smoothed surface (a); evolving once (b); close to degeneration (c).}
\label{bcd}
\end{figure}

We conclude this section with a brief comment on the cases depicted in Fig.~\ref{extra}. One of them is combined with {\tt aa} and presented in Fig.~\ref{dsc}. It looks like a Lawson surface rotated by $45^\circ$ around $Ox_3$. What is a congruence in the Euclidean case changes drastically in $\B$. Fig.~\ref{dsc} shows an example with $V\cong0.04$ and $A\cong0.328$, namely well above the corresponding $V$ for the magenta curve in Fig.~\ref{bbd}(c). The magenta curve starts at $V\cong0.016$ but now what happens for $V$ slightly below $0.04$ is that the surface degenerates to a pair of eighths of sphere centred at $O$ and $(\cfk,\cfk,0)$ and connected by a tiny tube.

\begin{figure}[ht!]
\center
\includegraphics[scale=0.35]{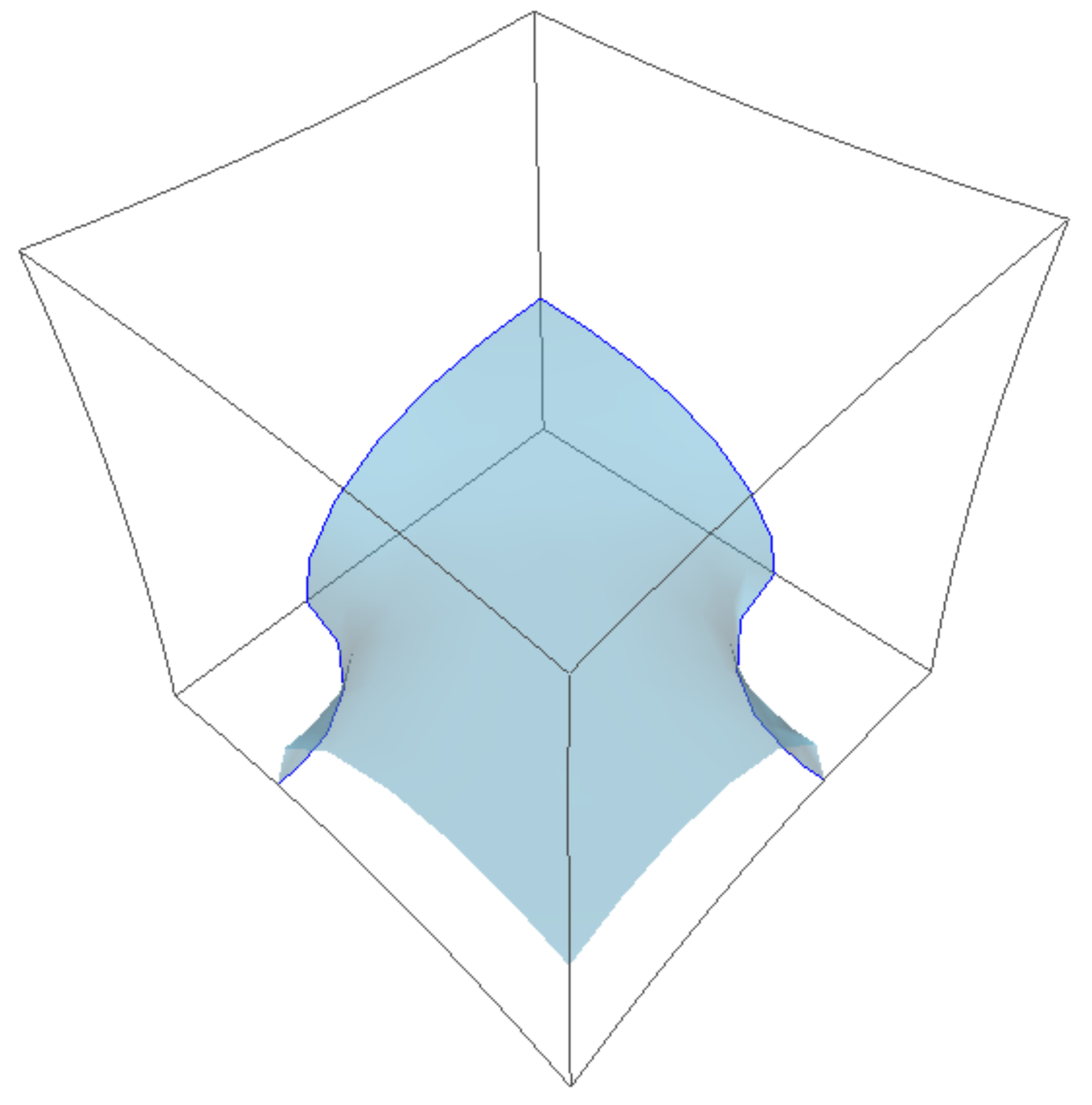}

\caption{The case {\tt aa} plus a discontinuous graph.}
\label{dsc}
\end{figure}

A little reflection explains the bad performance of the cases in Fig.~\ref{extra}: they make $\Om$ have less contact around $O$ with the fundamental planes compared with any option in Table~\ref{candid}. Indeed, the example in Fig.~\ref{dsc} is above all others in that table if we look at $V=0.04$ in Figs.~\ref{ddd}(b) and~\ref{abc}(b). That is why we have not included the cases of Fig.~\ref{extra} in Table~\ref{candid}.

Let us now resume our discussion at the beginning of this section. Lawson's result can be found in~\cite{L} but for convenience of the reader we shall reproduce it here:

\begin{thm} There exist two doubly periodic surfaces of constant mean curvature one and genus two contained in a slab of $\R^3$. The ambient translational fundamental cell is a hexagonal or square prism (of infinite height), respectively.
\label{lawson}
\end{thm}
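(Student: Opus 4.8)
The plan is to reproduce these surfaces by the \emph{conjugate surface construction}, which rests on the Lawson correspondence between minimal surfaces in the three-sphere $\S^3$ and CMC-one surfaces in $\R^3$. Recall that every simply connected minimal surface in $\S^3$ carries, through the associate family, an isometric CMC-one cousin in $\R^3$; under this conjugation a geodesic arc contained in the boundary of the minimal piece is sent to a planar line of curvature of the conjugate piece, and the conjugate surface meets the corresponding plane \emph{orthogonally}. This orthogonality is exactly the ingredient that permits Schwarz reflection to extend a single fundamental patch across symmetry planes, and it is what ultimately produces a complete, periodic surface.

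First I would fix the target translational cell and read off its symmetry group: for the square prism the group is generated by reflections in the four vertical faces together with the horizontal mid-reflection, while for the hexagonal prism the four-fold vertical symmetry is replaced by a six-fold (equivalently three-fold dihedral) one. Next, working inside $\S^3$, I would set up a geodesic contour whose dihedral angles are dictated by this prism geometry -- a closed geodesic polygon bounding a disk -- and invoke the classical solution of the Plateau problem in $\S^3$ to obtain a smooth minimal patch $M_0$ spanning it.

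Then I would pass to the conjugate patch $\Si_0\subset\R^3$. By the boundary correspondence above, each geodesic edge of $\deh M_0$ becomes a planar arc of $\Si_0$ lying in a face of the prism, with $\Si_0$ orthogonal to that face along the arc. Repeated Schwarz reflection in these faces then tiles a slab of $\R^3$ by congruent copies of $\Si_0$ and yields a complete CMC-one surface invariant under the rank-two translation lattice $\Lambda$ of the prism. Finally I would compute the genus of the quotient $\Si_0\text{-pattern}/\Lambda$ by counting the reflected copies of $\Si_0$ filling one translational cell and feeding the resulting vertex--edge--face data into the Euler-characteristic bookkeeping; for both the square and the hexagonal contour this should return genus two.

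The hard part will be the \emph{period problem}: one must ensure that the conjugate patch closes up consistently, so that the reflected images assemble into a genuinely doubly periodic surface with \emph{exactly} the stated translational cell, rather than one that fails to match along a seam or whose horizontal period is wrong. In the conjugate framework this reduces to verifying that the images of the boundary geodesics lie in planes in the correct relative position; for the highly symmetric square and hexagonal contours the reflection group forces nearly all of these conditions automatically, and the one remaining scaling degree of freedom is fixed by normalizing the mean curvature to one. A secondary technical obstacle is embeddedness of $\Si_0$ -- and hence of the reflected surface -- which I would handle by a barrier/convexity argument controlling the minimal disk $M_0$ in $\S^3$ before conjugating.
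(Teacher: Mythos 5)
The paper does not prove this theorem at all: it is Lawson's result, imported verbatim from \cite{L} ``for convenience of the reader,'' so there is no in-paper argument to compare yours against. That said, your route --- conjugate cousins of Plateau solutions for geodesic polygons in $\S^3$, using the fact that boundary geodesics become planar curvature lines met orthogonally, followed by Schwarz reflection --- is essentially Lawson's own construction, so you have correctly identified the method behind the cited source rather than invented a genuinely different one.

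As a proof it is still only a program. The three decisive steps are named but not carried out: (i) existence, uniqueness and embeddedness of the minimal disk spanning the specific square and hexagonal geodesic contours in $\S^3$ (you need the contour inside a convex ball for Morrey's Plateau solution to give you an embedded graph-like patch; ``a barrier/convexity argument'' is the right idea but is the actual work); (ii) the period problem, which you assert is ``forced automatically'' by the reflection group --- for these two highly symmetric contours that is true, but it has to be checked edge by edge that the image planes of the boundary arcs are the faces of the intended prism; and (iii) the Euler-characteristic count giving genus two, which is pure bookkeeping but is the only place the number ``two'' in the statement gets verified. One concrete slip: your claim that ``the one remaining scaling degree of freedom is fixed by normalizing the mean curvature to one'' is not available --- rescaling a CMC-one surface in $\R^3$ changes its mean curvature, so there is no free scale to adjust; the relevant parameter lives in the choice of the spherical polygon (the cousin automatically has $H=1$), and any residual period condition must be closed by varying that contour, not by rescaling in $\R^3$.
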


Namely, we had been focusing our attention to just one of Lawson's examples because our study is devoted to the cubic lattice. It is important to mention, however, that in~\cite[p358]{R2} the author comments on some computer simulations for the isoperimetric problem in a skewed Euclidean box. These were performed with the Surface Evolver by Pascal Romon (Universit\'e Paris-Est Marne-la-Vall\'ee), who has found an isoperimetric surface of genus two~\cite[Fig.3]{R2}, precisely the hexagonal example of Theorem~\ref{lawson}. See~\cite{R2} for details.

\section{Conclusions}
\label{concl}

At the beginning of Sect.~\ref{res} we recalled a question rose in~\cite{R1}. It remains open until the present day but the corresponding question in $\B$ has great chance to be answered because of the evidence shown in Fig.~\ref{bbd}(c). There the $\bbd$-curve is at least 10\% distant from the isop-curve, whereas in the Euclidean case it is only 1.7\%. This is consistent with the fact that close objects in the Euclidean viewpoint turn out to be farther apart in the hyperbolic geometry. Curiously in both cases the closest approach happens at the turning point between the two surfaces of genus one.

The numerical gap of 10\% makes it easier to prove that no Lawson surface can be isoperimetric in a hyperbolic space form: the fist step it to simplify the elliptic equations that define these surfaces by means of a theoretical and accurate evaluation. See some examples in \cite{V2}, \cite{V1}. By the way, in~\cite{V1} we get results whose numerical error is at most 0.05\%. Optimistically speaking, the 1.7\% could even be handled by our techniques.

Neither the hyperbolic nor the Euclidean geometry have positive sectional curvatures. Lawson surfaces have chance to be isoperimetric in some lens spaces $L(p,q)$, even though they are just the quotient of $\S^3$ by a finite cyclic group, which however gives bidirectional translations. Indeed, $\Gamma$ in Sect.~\ref{prelim} could have been defined just as $\langle T_1,T_2\rangle$ instead of $\langle T_1,T_2,T_3\rangle$. In this case $\aaa$ in Fig.~\ref{aaa}(a) will rise indefinitely in the vertical direction. The same holds for the examples discussed in~\cite{Ri}, \cite{R1} but since these works inspired ours then we have considered a cubic tesselation.

Regarding $L(p,q)$ we mention the recent result~\cite{V}, which excludes Lawson surfaces for infinitely many cases. There the author solves the isoperimetric problem for lens spaces with a large fundamental group: either geodesic spheres or tori of revolution about geodesics. He also proves that the only candidates in $L(3,q)$ are geodesic spheres of flat tori, $q=1,\,2$.

Notice that a cubic tesselation exists in $\S^3$ as we did for $\H^3$ but without a group that acts totally discontinuously. Hence we cannot have translations because of fixed points. In order to see this, first consider the tiling in Fig.~\ref{3planes&tiling45}(b) and the corresponding tiling of $\S^2$ depicted in Fig.~\ref{sc}(a). Of course, compactness implies there are six spherical squares and the south pole is centred at the bottom square, the darkest one in Fig.~\ref{sc}(a). Any edge of the tiling belongs to a plane of $\R^3$ given by $x_i=\pm\,x_j$, $i\ne j$. Hence continuous displacements that keep the tiling must come from $90^\circ$-rotations about the fundamental axes, and any such rotation will have two fixed points.

For the corresponding cubic tesselation of $\S^3\subset\R^4$ consider $(0,0,0,\pm1)$ as the north $\cN$ and south $\cS$ poles, respectively. The stereographic projection $\S^3\setminus\{\cN\}\to\R^3$ that keeps the equator $\S^2\times\{0\}$ will take its inside to the unit ball in $\R^3$, so that $\cS$ corresponds to the origin. This ball is shown by grid lines in Fig.~\ref{sc}(b), and without loss of generality our tesselation contains the spherical cube centred at $\cS$ and depicted there. It has dihedral angle $2\pi/3$, hence by spherical reflections we get eight cubes that cover the whole $\S^3$. Now take any spherical displacement that keeps the tesselation. The origin will slide along a fundamental axis but the spherical cubes that are crossed by the other axes will remain invariant. Hence we shall get fixed points in the three-dimensional case as well.

\begin{figure}[ht!]
\center
\includegraphics[scale=0.360]{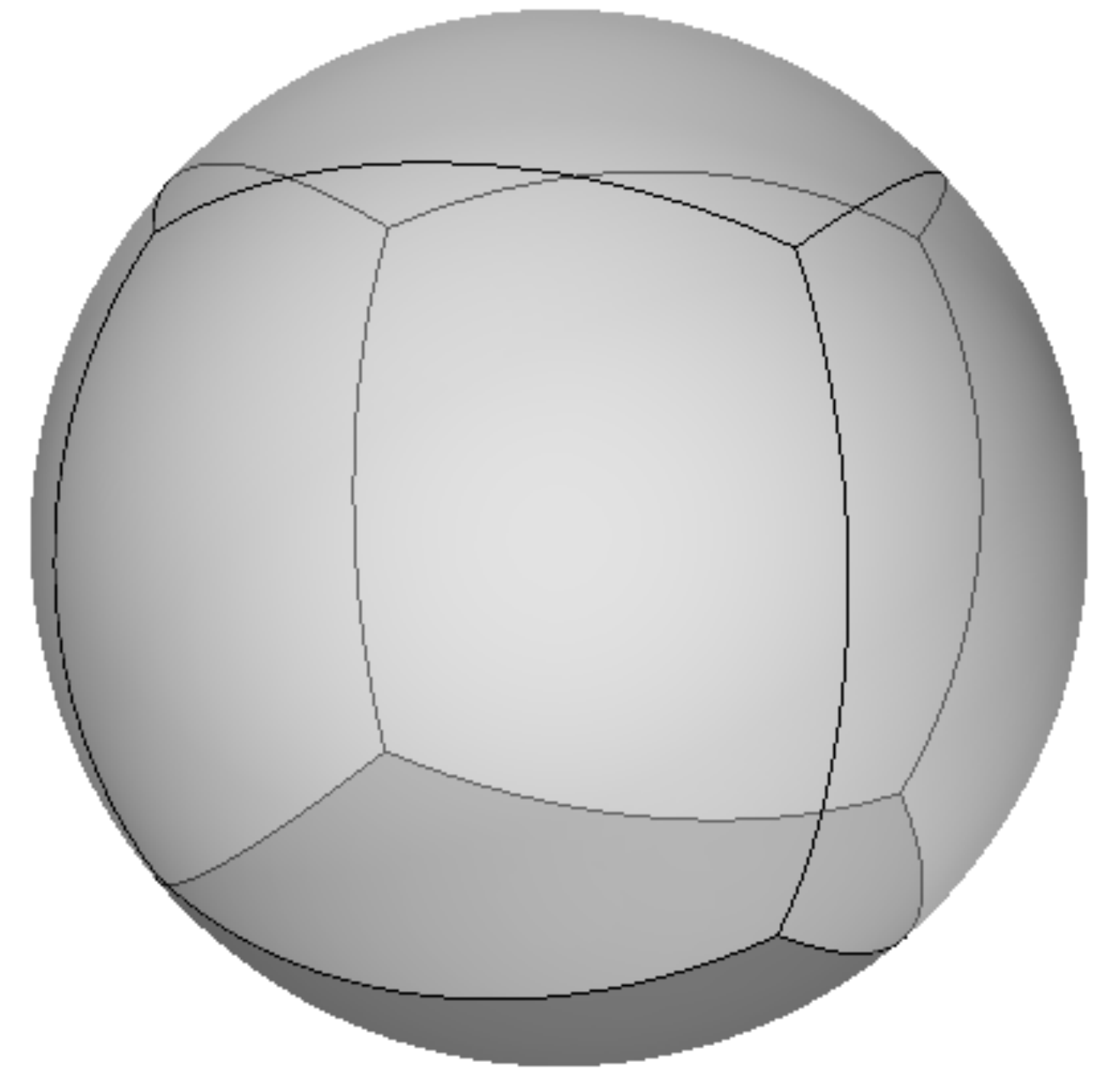}\hfill
\includegraphics[scale=0.205]{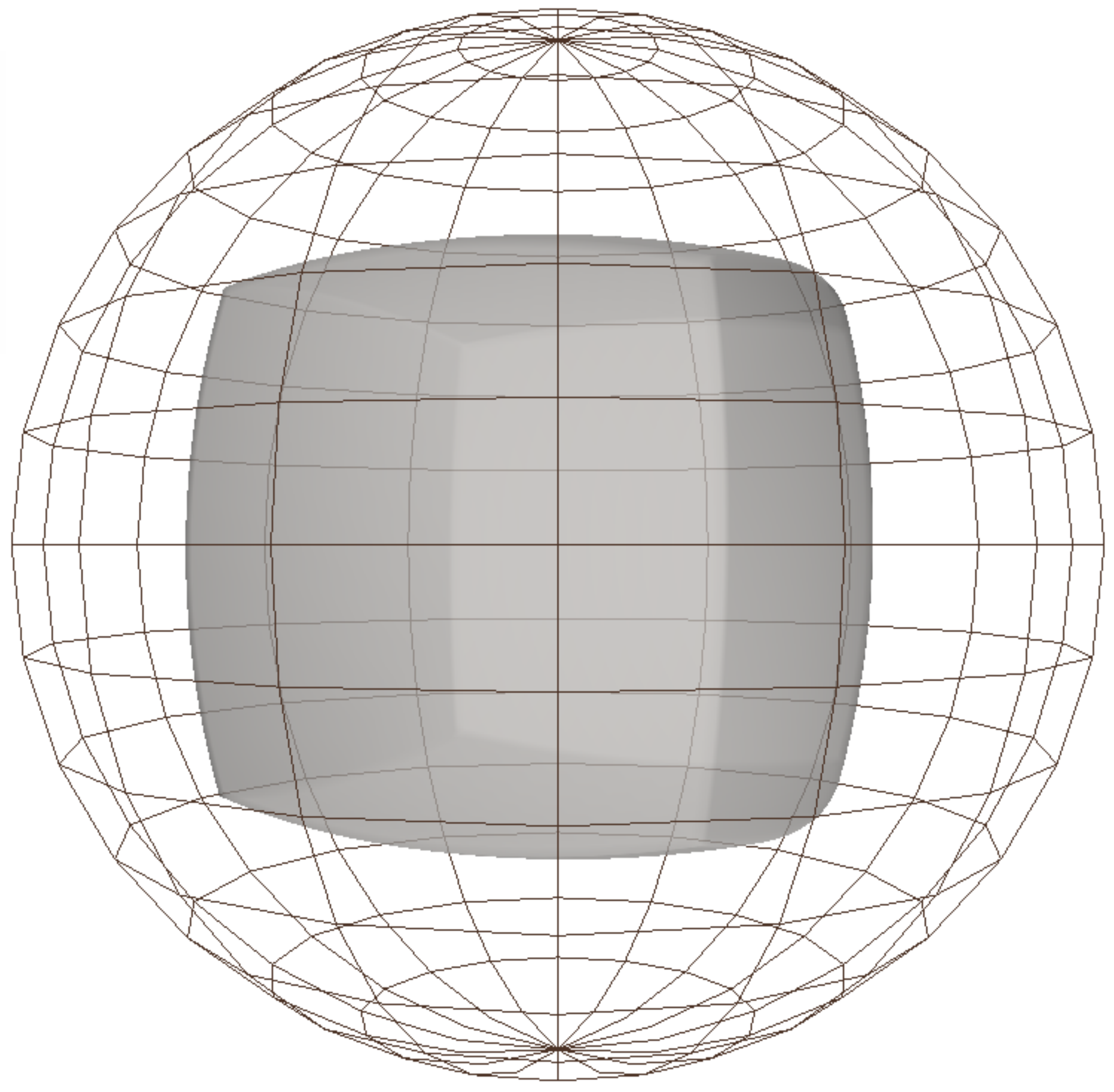}

\centerline{(a)\hfil\hfil(b)}

\caption{The unit ball containing the spherical cube with dihedral angle $2\pi/3$.}
\label{sc}
\end{figure}

As a matter of fact the unit ball can be viewed as a spherical cube of dihedral angle $\pi$, so that together with its reflection in $\S^2$ one gets a tesselation of $\S^3$ with only two cubes. However, for a new space form we must take the quotient by the group $\langle id,\mathfrak{A}\rangle$ (identity and antipodal maps). Hence the quotient will have a non-orientable boundary, which is the projective plane. However, the interest in the isoperimetric problem relies mostly on spaces whose geometry fits in the following sequence of generalisations: from Euclidean to spaces forms, from these to homogeneous spaces, and from these to Lie groups in general.

This has to do with the amount of results available for these geometries. For instance, in homogeneous spaces we still count on the maximum principle for CMC surfaces, and on the fact that they are minima of area under volume constraint. The {\it Surface Evolver} works in any finite dimension and it displays three-dimensional projections specified by the user. But of course our understanding and our control of the numerical computation will always work better in the third dimension. There the richest homogeneous spaces have four-dimensional isometry group, and for them much about CMC surfaces is already known. See~\cite{J} for a good start.

\section*{Acknowledgements}
We thank Prof K.Gro\ss e-Brauckmann at the Maths Dept of the University of Darmstadt for his valuable help in this work. This was during his stay in Brazil in 2010, supported by S\~ao Paulo Research Foundation (FAPESP) proc.09/15408-0. The first author was partially supported by FAPESP proc.16/23746-6.

\bibliographystyle{spbasic}      
\bibliography{mcs}

\end{document}